\def\BibTeX{{\rm B\kern-.05em{\sc i\kern-.025em b}\kern-.08em
    T\kern-.1667em\lower.7ex\hbox{E}\kern-.125emX}}
\newtheorem{theorem}{Theorem}
\newtheorem{definition}{Definition}
\newtheorem{problem}{Problem}
\newcommand{\ie}{\textit{i}.\textit{e}., }
\newcommand\Tstrut{\rule{0pt}{2.4ex}}         
\newcommand\Bstrut{\rule[-1.0ex]{0pt}{0pt}}   
\newcommand\TBstrut{\Tstrut\Bstrut}           
\begin{document}

\title{Federated Analytics-Empowered Frequent Pattern Mining for Decentralized Web 3.0 Applications\\
\thanks{This work is supported by the National Natural Science Foundation of China (Grant No. 62302292), the National Key R\&D Program of China (Grant No. 2023YFB2704400), and the Fundamental Research Funds for the Central Universities.
The work of D. Wang is supported by RGC-GRF 15209220, 15200321, 15201322, from ITC via project ”Smart Railway Technology and Applications” (No. K-BBY1), RGC-CRF C5018-20G, ITC ITF-ITS/056/22MX.
The work of Z. Han is supported by NSF CNS-2107216, CNS-2128368, CMMI-2222810, ECCS-2302469, US Department of Transportation, Toyota and Amazon. Corresponding author: Yifei Zhu.
}
}

\author{\IEEEauthorblockN{Zibo Wang\IEEEauthorrefmark{1},
Yifei Zhu\IEEEauthorrefmark{1}\IEEEauthorrefmark{3},
Dan Wang\IEEEauthorrefmark{4}, and
Zhu Han\IEEEauthorrefmark{5}}
\IEEEauthorblockA{\IEEEauthorrefmark{1}UM-SJTU Joint Institute, Shanghai Jiao Tong University}
\IEEEauthorblockA{\IEEEauthorrefmark{3}Cooperative Medianet Innovation Center (CMIC), Shanghai Jiao Tong University}
\IEEEauthorblockA{\IEEEauthorrefmark{4}Department of Computing, The Hong Kong Polytechnic University}
\IEEEauthorblockA{\IEEEauthorrefmark{5}Department of Electrical and Computer Engineering, University of Houston
\\Email: wangzibo@sjtu.edu.cn, yifei.zhu@sjtu.edu.cn, csdwang@comp.polyu.edu.hk, zhan2@uh.edu}
}

\maketitle

\begin{abstract}

The emerging Web 3.0 paradigm aims to decentralize existing web services, enabling desirable properties such as transparency, incentives, and privacy preservation. However, current Web 3.0 applications supported by blockchain infrastructure still cannot support complex data analytics tasks in a scalable and privacy-preserving way. This paper introduces the emerging federated analytics (FA) paradigm into the realm of Web 3.0 services, enabling data to stay local while still contributing to complex web analytics tasks in a privacy-preserving way.  We propose FedWeb, a tailored FA design for important frequent pattern mining tasks in Web 3.0. FedWeb remarkably reduces the number of required participating data owners to support privacy-preserving Web 3.0 data analytics based on a novel distributed differential privacy technique. The correctness of mining results is guaranteed by a theoretically rigid candidate filtering scheme based on Hoeffding's inequality and Chebychev's inequality. Two response budget saving solutions are proposed to further reduce participating data owners. Experiments on three representative Web 3.0 scenarios show that FedWeb can improve data utility by $\sim$25.3\% and reduce the participating data owners by $\sim$98.4\%.

\end{abstract}

\begin{IEEEkeywords}
Web 3.0, federated analytics, frequent pattern mining, differential privacy, decentralized applications
\end{IEEEkeywords}

\section{Introduction}

\textbf{When federated analytics meets Web 3.0.} Web 3.0 is believed to be the next generation of the Internet.
Compared to previous Web 2.0 driven by dominant Internet application providers (sometimes termed \textit{tech giants}), Web 3.0 starts a revolution in the storage, sharing, and profiting of personal data \cite{web3jp,web3eth}. In Web 2.0, these data are collected by the applications and utilized to create profit for the tech giants, resulting in privacy leakage and unfairness in data profiting. Web 3.0 advocates are wishing to resolve the issues by the decentralized nature of Web 3.0. In Web 3.0, only the user (data owner) can access and modify the private data, because these data are stored in the user devices, or encrypted by the users. Other entities can access the private data only when they are permitted by the data owner. In this setting, user privacy is properly handled, and the strongest privacy can be achieved when the data owner never grants permission to the data. In 2022, the market size of Web 3.0 reached \$1.73 billion \cite{web3market}.

\newcommand{\configheight}{0.17\textwidth}
\begin{figure*}[!tbp]
    \centering
    \subfigure[Complete data hiding sacrifices analytics]
    {
        \includegraphics[height=\configheight]{./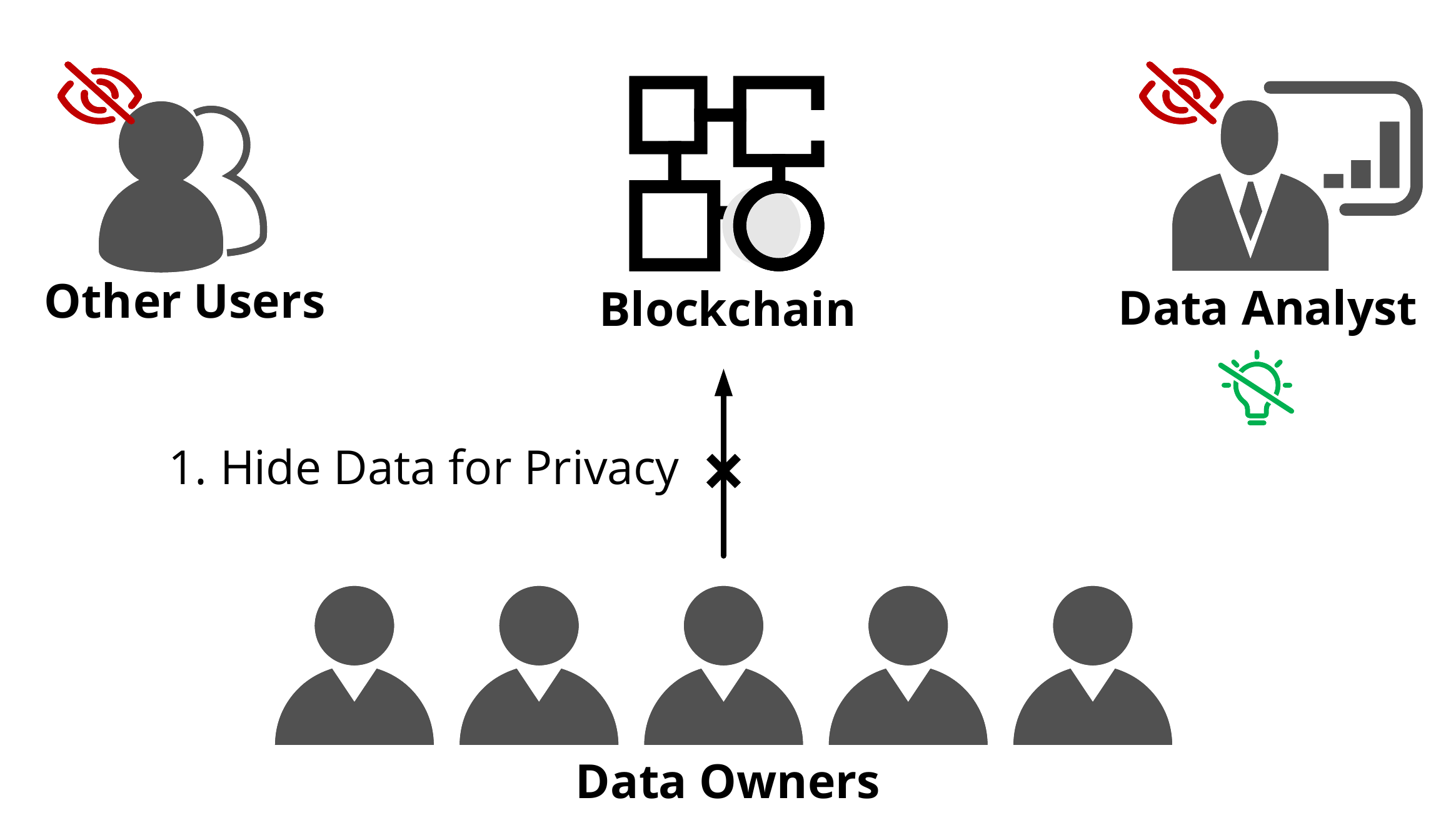}
        \label{subfig_concept_1}
    }
    \subfigure[Direct data sharing sacrifices privacy]
    {
        \includegraphics[height=\configheight]{./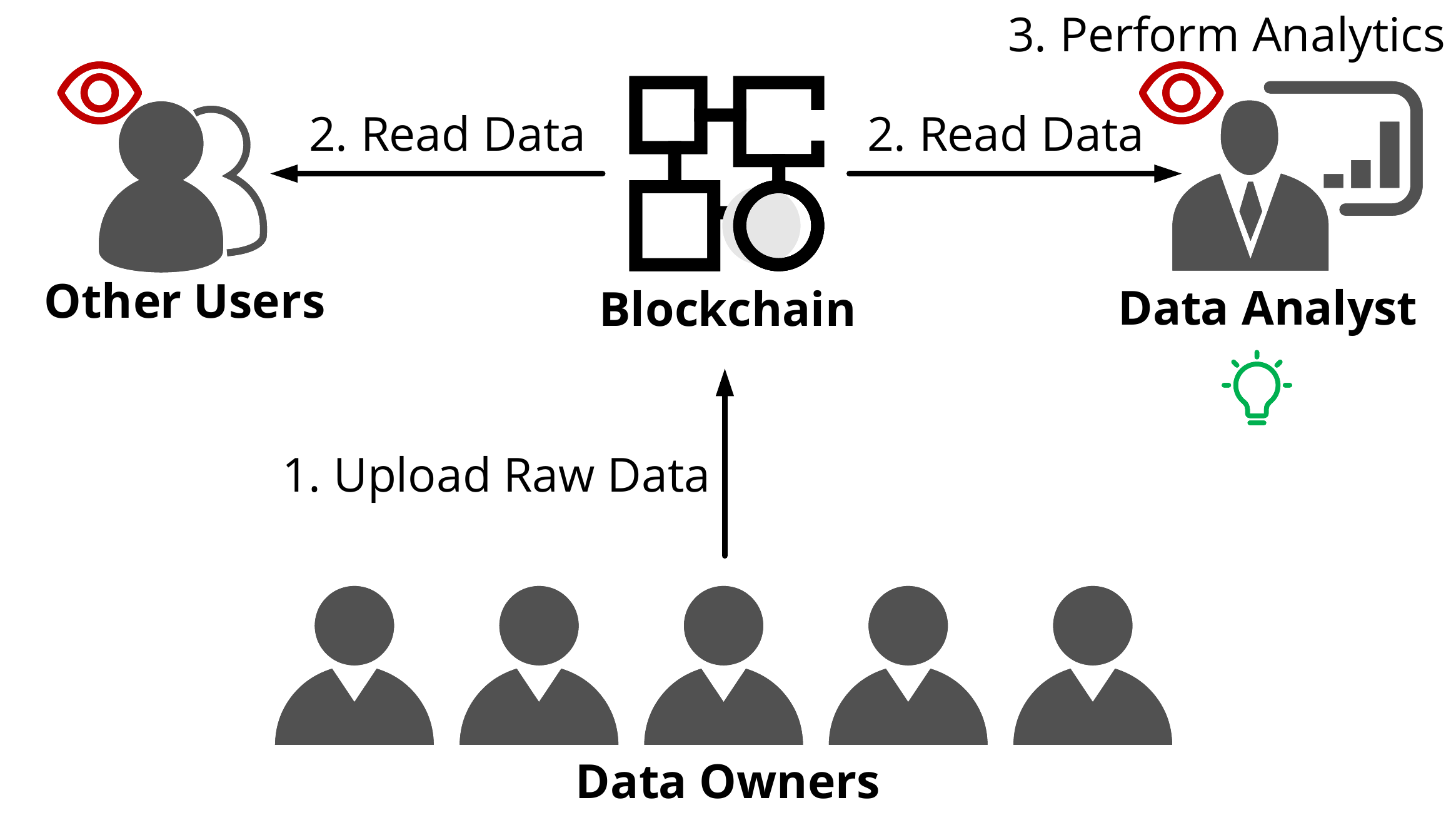}
        \label{subfig_concept_2}
    }
    \subfigure[FA-based data analytics for Web 3.0]
    {
        \includegraphics[height=\configheight]{./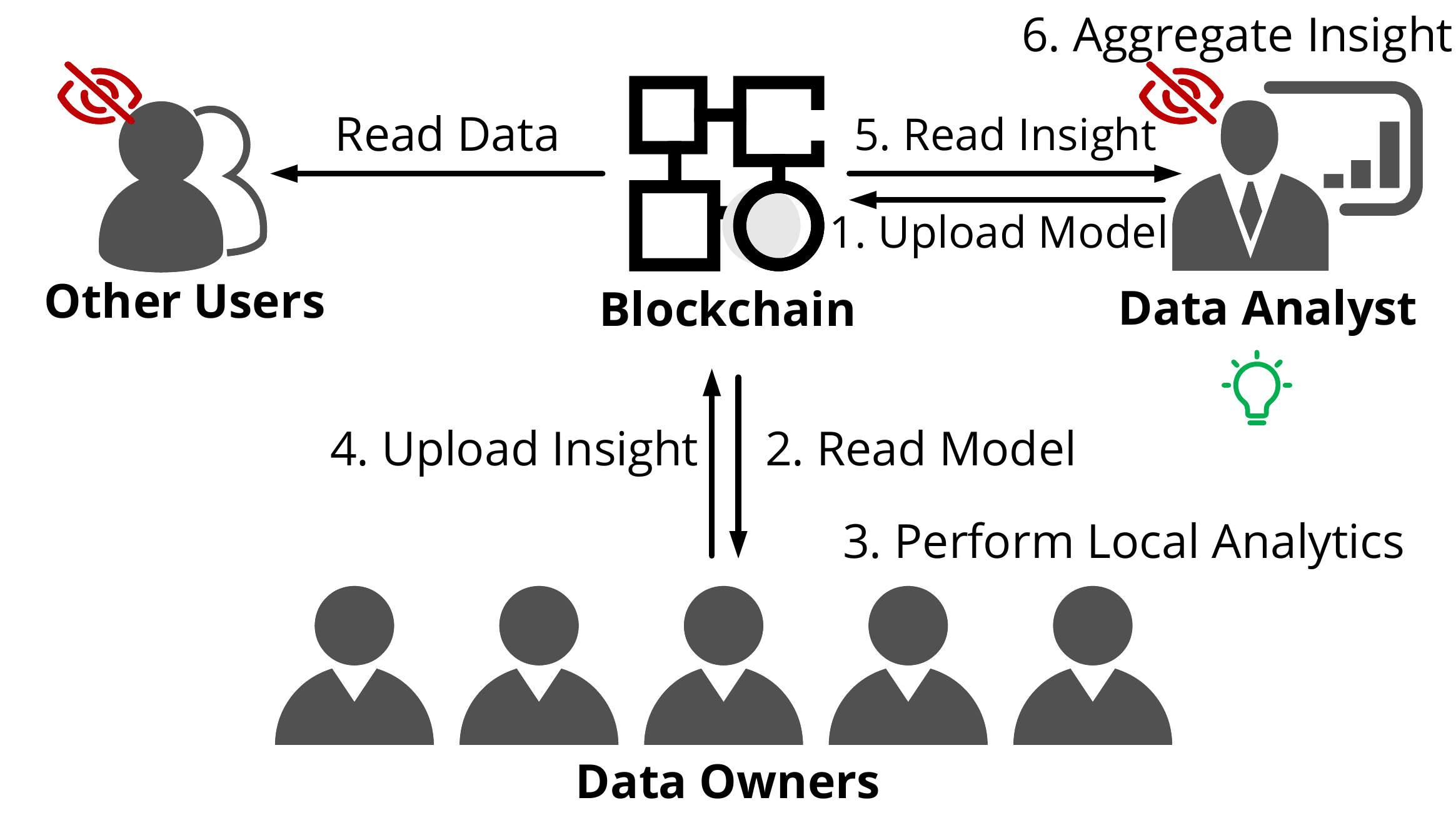}
        \label{subfig_concept_3}
    }
    \newline
    \caption{Possible statuses of privacy preservation and data analytics in Web 3.0 applications. The marks of {\color{red!70!black}eye} indicate the availability to learn private data. The marks of {\color{green!70!black}lamp} indicate the availability to optimize the web service via data analytics.}
    \label{fig_concept}
\end{figure*}

\begin{table*}
\caption{Web 3.0 services, their FPM-related applications, and the corresponding FPM tasks.}
\label{table_app}
\centering
\begin{tabular}{lllll}
\hline
  \textbf{Web service}& \textbf{Web 3.0 example}& \textbf{FPM-related application} & \textbf{Mined data} & \textbf{Pattern type} \TBstrut\\
\hline

    Internet browser & Brave \cite{brave} & User traversal analysis \cite{sun2004efficient} & Web browsing history & Sequence \Tstrut\\
    Operating system & WRIO \cite{wrio} & Next used app prediction \cite{lu2014mining} & Mobile app usage logs & Sequence\\
    Social network & Steemit \cite{steemit} & Trend mining \cite{nohuddin2012finding} & Timestamped social media posts & Itemset\\
    Online market &  OpenBazaar \cite{openbazaar}  & Market basket analysis \cite{saputra2023market} & Consumer baskets & Itemset\\
    Computer security & Web3 Antivirus \cite{web3antivirus} & Ransomware detection \cite{homayoun2017know} & System logs & Sequence\\
    Micro-blogging & Mirror \cite{mirror} & Emotion analysis \cite{skenduli2021mining} & Emotion classfication results of posts & Itemset\\
    Media player & OPUS \cite{opus} & Music recommendation \cite{hariri2012context} & Historical music playlists & Sequence\Bstrut\\ 
    
\hline
\end{tabular}  
\end{table*}

Although Web 3.0 has utilized its blockchain infrastructure to accomplish transparency, auditability, and data ownership management, privacy-preserving web data analytics is still a missing piece to complete the Web 3.0 ecosystem. 
In modern web applications, data analytics acts an important role in many tasks, like recommendation systems and service monitoring \cite{han2022data}. Analyzing data from vast data owners, if not misused, benefit both service providers and service users. These data analytics applications undoubtedly will continue to be a cornerstone for Web 3.0 applications.

However, to pursue the strongest privacy preservation, data owners may hide their sensitive data. It disables the data analytics and consequently degrades the web service and hurts both the service providers and users, as demonstrated in Fig. \ref{subfig_concept_1}. Alternatively, data owners can share their data by providing data access to the data analysts. But it removes the privacy preservation of the Web 3.0 scheme, as demonstrated in Fig. \ref{subfig_concept_2}.

Federated analytics (FA), a paradigm for privacy-preserving data analytics, has gained great success recently\cite{wang2021federated}. 
In FA, the raw data are prohibited to be transmitted to any other entities, which preserves the data privacy of data owners to some extent \cite{fa20,fltutorial}. In addition, it is able to apply additional privatization techniques, like cryptography or differential privacy (DP), to obtain a stronger and more formal privacy guarantee. 
FA has been applied in various data-driven scenarios, where data privacy is concerned by user awareness and regulations \cite{zhu2020federated,dennis2021heterogeneity,cormode2021bit,wang2022fedfpm}. 
The decentralized nature of FA naturally matches the requirement of Web 3.0 applications. 

To fill in the gap between the requirement for privacy preservation and the demand for high quality web data analytics in Web 3.0, in this paper, we introduce FA to the realm of Web 3.0 services. 
In our design, the data owners still grant data analysts permission to the local data. However, the data owners no longer let the data analysts access their raw data. \textit{Instead, they offer FA services to the data analysts, so that the data analytics tasks can be properly completed, while the data privacy is preserved}. The FA-assisted Web 3.0 data analytics model is illustrated in Fig. \ref{subfig_concept_3}. The local privacy preservation of FA makes data owners safe to share their data, while the careful design of FA achieves the high quality of data analytics and downstream web services.
Since FA algorithms are mostly task-specific, to support various data analytics demands, different tailored FA algorithms should be designed and deployed.

\textbf{FA-based frequent pattern mining for Web 3.0 applications.} Frequent pattern mining (FPM) is one of the most important tasks that discovers all substructures (items, subsets, or subsequences) with occurrence frequencies higher than a predefined threshold among the local data of data owners. It is applied in discovering and analyzing frequently occurring web events or objects. It has wide potential applications in Web 3.0 with some examples summarized in Table \ref{table_app}.

As a classical and important data analytics task, many FA-based FPM solutions have been proposed \cite{erlingsson2014rappor,apple2017learning,qin2016heavy,wang2018locally,wang2022fedfpm}. 
Although these solutions manage to conduct FPM with privacy preservation, applying them to Web 3.0 encounters extra difficulties. One of the most critical challenges introduced by the Web 3.0 scenarios is the limited scale of participating data owners. 
Existing FA FPM solutions require tens of millions of participating data owners, and otherwise suffer from low data utility. Such requirements can be fulfilled in Web 2.0 applications as tech giants can gather mass participants. For example, Gboard (5 billion+ downloads in GooglePlay) users are utilized by Google as FA participants \cite{fa20}. However, since the early-stage Web 3.0 applications have a relatively small scale of users (around millions, taking Steemit as an example \cite{steemit,li2021steemops}), this makes the existing solutions no longer valid. 
New FA algorithms are needed to handle FPM with limited participating data owners.

In this paper, we follow the aforementioned model to apply FA in Web 3.0 ecosystems, and propose a novel FPM solution, named FedWeb, to fulfill the need for FPM in Web 3.0 services. It leverages the interactive FA structure to formulate a query-response scheme between the data analyst and data owners, and the iterative scheme to progressively update the candidate patterns with the confidence bounding and Apriori mechanism. To reduce the required participating data owners as restricted by Web 3.0, FedWeb utilizes the novel distributed differential privacy (DDP) techniques, which provide a formal privacy guarantee and reduce the noise added to the uploads. In addition, two flexible budget saving strategies, named candidate padding and data owner reusing, are proposed to further reduce the scale of participating data owners.

\textbf{Our contributions.} In summary, our contributions are:
\begin{itemize}
    \item We introduce FA into the current Web 3.0 ecosystem, so that data analytics can be properly conducted without exposing the sensitive raw data.
    \item We design FedWeb for privacy-preserving FPM in Web 3.0 services. We incorporate the DDP privatization scheme and design two strategies to save the response budgets, overcoming the scale limitation in existing Web 3.0 services (Section \ref{sec_method}).  
    \item We rigorously prove the correctness of our candidate pattern filtering scheme. It formulates two confidence bounds using compounds from the Chebyshev's inequality and the Hoeffding's inequality, providing the theoretical guarantee of the FPM results of FedWeb (Section \ref{sec_method}).
    \item Experiment results on various web service scenarios reveal that FedWeb obtains a high F1 score and requires few participating data owners. It verifies the capability of FedWeb in completing Web 3.0 FPM tasks with high quality under limited available data owners (Section \ref{sec_evaluation}).
\end{itemize}

\section{Preliminary: Privacy for Web 3.0}\label{sec_preliminary}

In this section, we discuss some concepts regarding privacy criteria which are related to our design.

DP is a gold standard in the field of private data publication, owing to its assumption of strong adversary knowledge and rigid statistical guarantee \cite{dwork2006calibrating}. Central differential privacy (CDP) is the original and most widely used version of DP. CDP typically works in scenarios where the processor who runs the data publication mechanism can access the whole database. Therefore, CDP is no longer valid in Web 3.0 scenarios, where the data should be privatized before being gathered.
Local differential privacy (LDP) are applied  to handle the local privacy case \cite{erlingsson2014rappor,wang2022fedfpm,apple2017learning}. However, LDP requires data owners to add significant noise to the uploads, which degrades the data utility and requires more data owners.

DDP is an emerging solution to provide (Web 3.0 available) local privacy preservation, while gaining high data utility equivalent to CDP \cite{bagdasaryan2022towards}. DDP is realized via two key designs: 
\begin{itemize}
    \item \textbf{Distributed noise generation}: Each data owner adds a tailored noise to its local analytics result, where the noises from all data owners sum up to satisfy CDP.
    \item \textbf{Secure aggregation}: Data owners encrypt the upload of each individual data owner, and the data analyst then can only learn the sum of data owner uploads.
\end{itemize}

DDP inherits the privacy preservation of DP owing to the ingenious combination of the two designs. Distributed noise generation guarantees that the data statistically satisfy CDP after being aggregated. Secure aggregation prevents the data analyst from learning unaggregated uploads by transforming them into meaningless random values with encryption, as the added noise is not enough to solely satisfy DP (only LDP-level noise can preserve privacy for individual uploads).

Our distributed noise generation scheme is based on the geometric mechanism \cite{goryczka2015comprehensive}, which is described as follows.
\begin{definition}[Geometric mechanism]\label{def_geo}
Given a parameter $\alpha\in(0,1)$, a $\alpha$-geometric mechanism $M$ is a data publication mechanism possessing on dataset $S$ that has integer output. Given the true query result $Q(S)$, $M$ outputs $M(S)=Q(S)+G(\alpha)$, where $G(\alpha)$ is an integer two-sided geometric random variable with the following PDF:
\begin{equation}\label{eq_geo}
    \mathbb{P}(G(\alpha)=x)=\frac{1-\alpha}{1+\alpha}\alpha^{|x|}.
\end{equation}
\end{definition}

When $Q$ is defined as summation in $S$ and the value of every record in $S$ is binary, the \textit{sensitivity} of $Q$ becomes 1. $\alpha$-geometric mechanism in this case satisfies $(-\ln\alpha)$-CDP.

To utilize the geometric mechanism in distributed noise generation, $G$ should be decomposed into multiple random variables that are added by different users respectively. P\'olya random variables are utilized to realize it, shown as follows.
\begin{definition}[P\'olya distribution]\label{def_polya}
Let $\mathcal{X}$ be a random variable following $P\acute{o}lya(r,p)$ distribution with $r\in\mathbb{R}$ and $p\in(0,1)$, its PDF is given as follows.
\begin{equation}
    \mathbb{P}(\mathcal{X}=x)=\binom{r+x-1}{r-1}p^x (1-p)^r.
\end{equation}
\end{definition}

\begin{theorem}\label{theorem_ddp}
Suppose there are $n$ distributed data owners, each data owner adds $\mathcal{X}_i - \mathcal{Y}_i$ to its upload, where $\mathcal{X}_i$ and $\mathcal{Y}_i$ are i.i.d. $P\acute{o}lya(1/n,\alpha)$ variables. These variables sum up to follow two-sided geometric distribution, \ie
\begin{equation}
    \sum_{i=1}^n (\mathcal{X}_i - \mathcal{Y}_i) = G(\alpha).
\end{equation}
\end{theorem}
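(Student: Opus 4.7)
The plan is to decompose the two-sided geometric noise $G(\alpha)$ as the difference of two independent one-sided geometric random variables with parameter $\alpha$, and then use the infinite divisibility of the P\'olya (negative binomial) law to split each one-sided geometric into a sum of $n$ i.i.d.\ P\'olya contributions. Because the $\mathcal{X}_i$ are independent of the $\mathcal{Y}_i$, the claim will then follow by rearranging the sum.

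First I would establish the one-sided identity: if $U$ and $V$ are independent with $\mathbb{P}(U{=}x)=(1-\alpha)\alpha^x$ for $x\in\mathbb{Z}_{\geq 0}$, a direct convolution for $k\geq 0$ gives
\begin{equation*}
    \mathbb{P}(U-V=k)=\sum_{y\geq 0}(1-\alpha)^2\alpha^{k+2y}=\frac{1-\alpha}{1+\alpha}\alpha^{k},
\end{equation*}
and symmetry in $\alpha^{|k|}$ handles $k<0$, matching (\ref{eq_geo}). So it suffices to show $\sum_i \mathcal{X}_i$ and $\sum_i \mathcal{Y}_i$ are each geometric with parameter $\alpha$, since then $\sum_i(\mathcal{X}_i-\mathcal{Y}_i)=(\sum_i\mathcal{X}_i)-(\sum_i\mathcal{Y}_i)$ is exactly $G(\alpha)$ in distribution.

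Second, I would compute the probability generating function (PGF) of a $P\acute{o}lya(r,\alpha)$ random variable. Using the generalized binomial series $\sum_{x\geq 0}\binom{r+x-1}{r-1}t^x=(1-t)^{-r}$ (valid for $|t|<1$ and any real $r>0$, with the coefficient interpreted as $\Gamma(r+x)/(\Gamma(r)x!)$), Definition~\ref{def_polya} yields
\begin{equation*}
    \mathbb{E}\bigl[z^{\mathcal{X}}\bigr]=(1-\alpha)^r\sum_{x\geq 0}\binom{r+x-1}{r-1}(\alpha z)^x=\left(\frac{1-\alpha}{1-\alpha z}\right)^{\!r}.
\end{equation*}
Hence, for i.i.d.\ $\mathcal{X}_1,\dots,\mathcal{X}_n\sim P\acute{o}lya(1/n,\alpha)$, independence gives the PGF of $\sum_{i=1}^n\mathcal{X}_i$ as $\bigl((1-\alpha)/(1-\alpha z)\bigr)^{n\cdot 1/n}=(1-\alpha)/(1-\alpha z)$, which is the PGF of a geometric random variable with parameter $\alpha$ (equivalently $P\acute{o}lya(1,\alpha)$). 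Uniqueness of PGFs then identifies the distribution, and the same argument applies to $\sum_i\mathcal{Y}_i$.

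Combining the two steps finishes the proof. The main obstacle is not conceptually deep but notational: making sure the P\'olya PDF in Definition~\ref{def_polya} is interpreted correctly for the non-integer shape parameter $r=1/n$ so that the generalized binomial coefficient and the series identity above apply, and confirming the resulting expression really is a probability mass function (which is automatic once the series sums to $(1-\alpha)^{-r}$). Beyond that, the argument is a clean PGF factorization plus the convolution computation for the difference of two geometrics.
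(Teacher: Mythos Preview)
Your argument is correct. The paper itself does not give a proof at all: it simply defers to Theorem~5.1 of \cite{goryczka2015comprehensive}. So there is no ``paper's own proof'' to compare against, and your self-contained derivation via (i) the convolution identity showing that the difference of two independent one-sided geometrics yields the two-sided geometric of Definition~\ref{def_geo}, and (ii) the PGF factorization exhibiting the infinite divisibility of the negative-binomial/P\'olya family, is the standard way to establish the result and is essentially what the cited reference does. One minor remark: you should state the final equality as an equality \emph{in distribution}, since the theorem as written uses ``$=$'' loosely; your proof already implicitly does this through the PGF uniqueness step.
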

\begin{proof}
See Theorem 5.1 of \cite{goryczka2015comprehensive}.
\end{proof}
Definition \ref{def_geo} and Theorem \ref{theorem_ddp} show that, by individually adding the difference of two P\'olya variables, data owners can collaboratively generate a Geometric noise on the aggregated data, which satisfies the requirement of DDP. A $P\acute{o}lya(r,p)$ variable $\mathcal{X}$ is generated via a Poisson-Gamma mixture \cite{johnson2005univariate}: first draw $\gamma$ from the $Gamma(r,p/(1-p))$ distribution, and then draw $\mathcal{X}$ from the Poisson distribution with parameter $\gamma$.

\section{Threat Model and Problem Formulation}\label{sec_system}
\textbf{Threat model.} Being consistent with many previous studies \cite{wang2022fedfpm,erlingsson2014rappor,apple2017learning,qin2016heavy,wang2018locally}, we consider the data analyst and data owners to be semi-honest, \ie the data analyst or each data owner executes the protocol honestly, but try to learn the data from the (other) data owners with the information they received during the protocol. In addition, we consider that a limited number of participants may collude to learn the data. Particularly, we assume there exists collusion between, at most, the data analyst and one-third of participating data owners.

\textbf{Problem formulation.}
Our system consists of a data analyst hosting an FPM task, and a number of data owners $D$ holding private data. The data analyst wishes to discover the frequent patterns within the local data of data owners, \ie the patterns whose frequencies are higher than a user-defined frequency threshold $f$. The FA design consists of two functions: an insight derivation function $\mathcal{I}$ which executes on one data owner $d\in D$, and an aggregation function $\mathcal{A}$ which aggregates the insights to derive the final output\footnote{We use $d$ to denote both a data owner and its local data}. 
The problem of privacy-preserving FPM in Web 3.0 scenarios we try to resolve can be described as follows.
\begin{problem}
Design the functions $\mathcal{I}$ and $\mathcal{A}$, where $\mathcal{I}$ executes on each participating data owner $d\in D'$, where the set of participating data owners $D'$ is a subset of available data owners, \ie $D'\subseteq D$. $\mathcal{A}$ takes the outputs of $\mathcal{I}$ as input, and outputs a set of discovered frequent patterns $\mathcal{F} = \mathcal{A}(\{\mathcal{I}(d) | d\in D'\})$. The algorithm design should achieve three goals:
\begin{itemize}
    \item $\mathcal{F}$ should be closed to the ground truth frequent patterns $\mathcal{F}'$, \ie high F1 score $F1(\mathcal{F},\mathcal{F}')$ should be achieved.\footnote{F1 score is defined by $2pr/(p+r)$, where $p$ is precision and $r$ is recall.}
    \item The algorithm should execute on a small scale of participating data owners, \ie the size of $D'$ should be low.
    \item The output of $\mathcal{I}$ should satisfy DDP. In detail, the upload of individual data owner $\mathcal{I}(d)$ should be encrypted, and the aggregated uploads should statistically satisfy CDP.
\end{itemize}
\end{problem}

\section{Privacy-Preserving FPM Design for Web 3.0}\label{sec_method}

In this section, the Web 3.0-oriented FA-based privacy-preserving FPM scheme, FedWeb, is described in detail.

The FedWeb design is iterative, where the data analysts and data owners communicate for multiple rounds until the FPM task is completed. Each round consists of three phases: 1) \textit{candidate pattern distribution}, where the data analyst distributes possible frequent patterns to the data owners; 2) \textit{DDP-based private response}, where the data owners respond on their received candidates based on the established DDP mechanism; 3) \textit{response analysis}, where the data analyst gathers the uploads, analyzes the uploads, filters candidates, and generates new candidates. In addition, two flexible strategies are proposed to reduce the required data owners.

\subsection{Phase 1: Candidate pattern distribution}\label{subsec_sol1}

In FedWeb, the data analyst maintains a pool of candidate patterns. In the first round, the ``most basic'' candidates, \ie those who cannot be generated by other patterns with the Apriori property, are placed into the candidate pool. For example, in frequent itemset mining tasks, all itemsets with only one item are placed into the candidate pool in the first round. The candidate pool is updated in each response aggregation phase. When the candidate pool becomes empty, the FPM task is completed and the algorithm terminates.

In the candidate distribution phase, the data analyst activates data owners to become participants in this round, and each data owner is assigned a set of candidates to respond to. Two user-defined parameters affect the procedure: the maximal number of candidates each data owner can respond to (response budget) $K$, and the number of responses each candidate receives in one round $P$. Let $\mathcal{C}_t$ denotes the candidate pool in round $t$, and $|\mathcal{C}_t|$ denotes its size. The data analyst should activate sufficient data owners so that each candidate in $\mathcal{C}_t$ has been assigned $P$ data owners, while each data owner can respond to at most $K$ candidates but cannot respond to one candidate more than once. We use $N_t$ to denote the set of participating data owners in round $t$, with size $|N_t|$. 

Each candidate in the candidate pool is given a unique index from 1 to $|\mathcal{C}_t|$. The index is valid throughout this round, and the participating data owners are informed of the indexes of their received candidates. We use $\mathcal{C}_t (i)$ to denote the $i$-th candidate.

\subsection{Phase 2: DDP-based private response}

After receiving at most $K$ candidates from the data analyst, each data owner first checks whether the received candidates are within its local data. Then, it constructs a response vector encoding their response to all the candidates. Next, distributed noise is added to the upload to satisfy DP. Finally, the response vector is uploaded to the data analyst with secure aggregation.

Consider a data owner with index $j$, its local data is denoted $d_j$. The data owners check whether the candidates are within their local data, following the specification of FPM subproblems. If a candidate $\mathcal{C}_t (i)$ is within its local data, it is written as $\mathcal{C}_t (i) \in d_j$. Otherwise, it is $\mathcal{C}_t (i) \notin d_j$.

After the checking, each data owner constructs a response vector with length $|\mathcal{C}_t|$, where each entry represents the data owner's response to a candidate (with the corresponding index). Particularly, the $i$-th entry of the response vector will be 1 when the data owner received candidate $i$ from the data analyst and candidate $i$ is within its local data, or 0 otherwise. 

After the construction of the raw response vector, distributed noise is added to all the received candidates, no matter whether it is within its local data. 
Finally, denoting the final response from data owner $j$ as $R_j$, its $i$-th entry is calculated as follows.
\begin{equation}\label{eq_res}
    R_j[i] =\left\{
    \begin{alignedat}{2}
    &0, &&{\rm~~~if~}j{\rm~does~not~receive~}\mathcal{C}_t (i), \\
    &\mathcal{X} - \mathcal{Y}, &&{\rm~~~if~}j{\rm~receives~}\mathcal{C}_t (i)\land  \mathcal{C}_t (i) \notin d_j,\\
    &1+\mathcal{X} - \mathcal{Y}, &&{\rm~~~if~}j{\rm~receives~}\mathcal{C}_t (i)\land\mathcal{C}_t (i) \in d_,
    \end{alignedat}
    \right.
\end{equation}
where $\mathcal{X}$ and $\mathcal{Y}$ are i.i.d. $P\acute{o}lya(1/P,e^{-\epsilon/K})$ variables. 

\begin{algorithm}[tbp]
 \caption{FedWeb: Data owner procedure}
 \label{algo_owner}
 \begin{algorithmic}[1]
 
 \makeatletter
\def\ALG@special@indent{%
    \ifdim\ALG@thistlm=0pt\relax
        \hskip-\leftmargin
    \else
        \hskip\ALG@thistlm
    \fi
}
\newcommand{\Emptyline}{\item[]\noindent\ALG@special@indent }
\makeatother

\renewcommand{\algorithmicrequire}{\textbf{Input:}}
\renewcommand{\algorithmicensure}{\textbf{Output:}}
\Require Maximal candidates of each data owner: $K$; DDP parameter: $\epsilon$; List of candidates (and their indexes) it should respond $\mathcal{C}$, number of global candidates: $|\mathcal{C}_t|$.
\Ensure Response: $R$
 \Function{Respond}{$K,\epsilon,\mathcal{C}$}
 \State $R \leftarrow$ $|\mathcal{C}_t|$-length all-zero vector
 \For{$c\in\mathcal{C}$ ($i_c$ is the global index of $c$)}
 \If{$c$ is within local data}
 \State $R[i_c] \leftarrow$ 1
 \EndIf
 \State $R[i_c]\leftarrow R[i_c] + (\mathcal{X}-\mathcal{Y})$ \Comment{Eq. \eqref{eq_res} defines $\mathcal{X},\mathcal{Y}$}
 \EndFor
 \State $\mathcal{N} \leftarrow$ randomly picked encryption neighbors
 \For{$x\in\mathcal{N}$}
 \State$\mathcal{M}_x \leftarrow$ pairwise $|\mathcal{C}_t|$-length mask 
 \Emptyline\Comment{Data owner $x$ gets an inversed mask}
 \State $R\leftarrow R + \mathcal{M}_x$
 \EndFor
 \State \textbf{return} $R$
 \EndFunction
\end{algorithmic} 
\end{algorithm}

After the generation of the response vector, the data owners upload the vectors to the data analyst with secure aggregation. In this paper, we utilize the secure aggregation scheme in \cite{bell2020secure}. Each data owner communicates with other $O(\log N_t)$ data owners, and shares a pairwise mask and a secret share of the self mask with each other data owners. After that, each data owner adds the pairwise masks and the self mask to their uploads. The data analyst can learn the sum of the uploads by summing their encrypted uploads to eliminate the pairwise masks, requesting the secret shares from other data owners to eliminate the self mask, and requesting the pairwise masks from other data owners when a data owner drops out. The solution defends collusion among one-third of participating data owners and data analyst, which meets our threat model. Many other options to realize DDP are surveyed in \cite{goryczka2015comprehensive}.

As for now, the workloads of data owners in this round are all completed, and the procedure of data owners in each round (insight derivation function $\mathcal{I}$) is demonstrated in Algo. \ref{algo_owner}, with part of the secure aggregation scheme omitted for the sake of clarity. The details of the omitted part to construct the response vector can be checked in \cite{bell2020secure}. The private response scheme provides a formal privacy guarantee, as shown below.
\begin{theorem}
The private response scheme demonstrated in Algo. \ref{algo_owner} satisfies $\epsilon$-DDP for every data owner.
\end{theorem}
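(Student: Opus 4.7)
The plan is to decompose the $\epsilon$-DDP guarantee into its two constituent ingredients identified in Section \ref{sec_preliminary} (distributed noise generation achieving central-level privacy on the aggregate, plus secure aggregation hiding individual uploads), and to handle each in turn. For the first ingredient, I would fix an arbitrary candidate $\mathcal{C}_t(i)$ and note that exactly $P$ data owners are asked to respond to it, so the analyst's view of this candidate after summing the uploads equals $\sum_{j}\mathbf{1}[\mathcal{C}_t(i)\in d_j] + \sum_{j=1}^{P}(\mathcal{X}_j - \mathcal{Y}_j)$, where the noise sum is over $P$ independent copies of $P\acute{o}lya(1/P,e^{-\epsilon/K})$ differences. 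Invoking Theorem \ref{theorem_ddp} with $n=P$ and $\alpha=e^{-\epsilon/K}$, this noise sum is distributed as $G(e^{-\epsilon/K})$. Because the underlying query is a sum of binary indicators (sensitivity $1$), Definition \ref{def_geo} together with the sensitivity remark following it gives $(-\ln e^{-\epsilon/K})=\epsilon/K$ CDP for the per-candidate release.

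Next I would handle the composition across the (at most $K$) candidates a single data owner actually answers. Fixing a particular owner, its data enters only those aggregates corresponding to the candidates in its response set; for candidates it did not receive the aggregate distribution is independent of its data (the $R_j[i]=0$ branch in Eq. \eqref{eq_res}) and contributes no privacy loss. For each of the at most $K$ candidates it did answer, the per-candidate release is $\epsilon/K$-CDP by the previous paragraph, and by sequential composition of differential privacy the joint release is $K\cdot(\epsilon/K)=\epsilon$-CDP with respect to this owner's data. For the second ingredient, I would directly invoke the secure-aggregation protocol of Bell et al.\ imported in the preceding text: under the semi-honest model and the one-third collusion bound stated in Section \ref{sec_system}, the pairwise masks and self-mask shares render the analyst's view of each $R_j$ computationally indistinguishable from random subject only to revealing the vector sum, matching the ``secure aggregation'' clause of the DDP definition.

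The hard part, which I would treat most carefully, is the composition step, since one must argue that the $K$ per-candidate aggregates the owner participates in really behave as $K$ separately analyzable DP mechanisms. The key sub-claim is that the noise terms $\mathcal{X}-\mathcal{Y}$ the owner injects into different coordinates of $R_j$ are drawn independently (as written in Algo.\ \ref{algo_owner}, line 7, inside the per-candidate loop), so conditional on the owner's local data the releases are independent up to deterministic indicator shifts, which is exactly the setting where sequential composition applies with the sum bound $\epsilon$. Once this independence and the disjoint pools of $P$ noise contributors per candidate are spelled out, chaining the per-candidate $\epsilon/K$-CDP with the $K$-fold composition and the secure-aggregation lemma yields the claimed $\epsilon$-DDP guarantee for every data owner.
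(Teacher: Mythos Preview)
Your proposal is correct and follows essentially the same route as the paper: invoke Theorem~\ref{theorem_ddp} with $n=P$ and $\alpha=e^{-\epsilon/K}$ together with Definition~\ref{def_geo} to get $(\epsilon/K)$-CDP on each candidate's aggregate, then apply sequential composition over the at most $K$ candidates a data owner answers. The paper's own proof is a two-line sketch of exactly these two steps; your version is considerably more careful (spelling out the secure-aggregation clause, the zero-contribution of unreceived candidates, and the independence of the per-coordinate P\'olya draws needed for composition), but the underlying argument is identical.
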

\begin{proof}
The conclusion of Theorem \ref{theorem_ddp} and Definition \ref{def_geo} shows that, the response on each candidate satisfies $(\epsilon/K)$-CDP after it aggregates. Since each data owner responds to at most $K$ candidates, a data owner then satisfies $\epsilon$-DDP according to the composition theorem of DP, which concludes the proof.
\end{proof}

\subsection{Phase 3: Response analysis}

For a candidate $c$ in the candidate pool, the data analyst maintains a profile consisting of three values: the sum of historical response values $r_c$ (represented by the corresponding entry of the response vector), the number of data owners that have responded to the candidate $n_c$, and the number of rounds the candidate stayed in the candidate pool $m_c$ (there is a constant relationship that $n_c = P m_c$). After receiving the aggregated response vector, the data analyst updates the profile of every candidate in the pool. Then, the data analyst performs a two-stage analysis of the candidates: filtering existing candidates and generating new candidates.

For filtering existing candidates, the data analyst evaluates each candidate: whether the candidate can be accepted or rejected as a frequent pattern, and removed from the candidate pool. Since the scheme involves many randomized procedures, we can not guarantee the correctness of every decision. Notice that $r_c/n_c$ is exactly an unbiased estimate of candidate frequency, and the estimation is getting more accurate when the candidate receives more responses from data owners. The decision procedure is driven by the confidence bounds: we accept/reject a candidate as a frequent pattern when there is sufficient probabilistic confidence that its true frequency is higher/lower than the FPM threshold. Otherwise, the candidate remains in the pool to receive more responses in later rounds. 

We derive the final bounds as compounds concluded from the Chebyshev's inequality and the Hoeffding's inequality.
\begin{theorem}[Confidence bound of frequent patterns]\label{theorem_ucb}
A candidate $c$ with profile $(r_c,n_c,m_c)$ is a frequent pattern (target frequency $f$) with confidence $(1-\eta_g)(1-\eta_s)$ when
\begin{equation}
    \frac{r_c}{n_c} - \sqrt{\frac{2e^{-\epsilon/K}}{2(1-e^{-\epsilon/K})^2 P^2 m_c \eta_g}} - \sqrt{\frac{\ln \eta_s}{-2n_c}} \geq f.
\end{equation}
\end{theorem}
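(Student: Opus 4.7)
The plan is to decompose $r_c$ into a ``data part'' and a ``noise part'', bound each with a separate concentration inequality, and then combine the two bounds via independence. Specifically, after $m_c$ rounds each involving $P$ responses, I would write $r_c = B + G$, where $B = \sum_{j=1}^{n_c} Z_j$ is the sum of $n_c = P m_c$ i.i.d.\ Bernoulli indicators $Z_j \in \{0,1\}$ with mean equal to the true candidate frequency $p$, and $G$ is the sum of $m_c$ independent two-sided geometric variables $G(\alpha)$ with $\alpha = e^{-\epsilon/K}$; the latter fact follows from Theorem \ref{theorem_ddp} applied round-by-round, since the $P$ per-round Pólya noise differences aggregate to a single $G(\alpha)$. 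Crucially $B \perp G$, and $r_c/n_c$ is therefore an unbiased estimator of $p$ with an additive noise decomposition $r_c/n_c - p = (B/n_c - p) + G/n_c$.

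Next I would bound the two components separately. For the \emph{data component}, Hoeffding's inequality on bounded $[0,1]$ variables gives $\Pr(B/n_c - p \ge t) \le \exp(-2 n_c t^2)$; setting the right-hand side equal to $\eta_s$ and solving yields the tail margin $\sqrt{-\ln \eta_s / (2 n_c)}$, matching the Hoeffding term in the theorem. For the \emph{noise component}, I would first compute $\mathrm{Var}(G(\alpha)) = 2\alpha/(1-\alpha)^2$ from the PDF in \eqref{eq_geo} (using $E[G]=0$ by symmetry and the standard identity $\sum_{x\ge 1} x^2 \alpha^x = \alpha(1+\alpha)/(1-\alpha)^3$), then multiply by $m_c$ for the sum and divide by $n_c^2 = P^2 m_c^2$ to get $\mathrm{Var}(G/n_c) = 2 e^{-\epsilon/K} / \bigl(P^2 m_c (1-e^{-\epsilon/K})^2\bigr)$. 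Applying Chebyshev's inequality with threshold $\eta_g$ then produces the other margin in the theorem statement.

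The final step is to combine: since $B$ is a function of the private data and $G$ is exogenous mechanism noise, the two tail events are independent, so the joint event ``both Hoeffding and Chebyshev margins are respected'' has probability at least $(1-\eta_g)(1-\eta_s)$. On this joint event, rearranging $r_c/n_c - p \le \sqrt{\mathrm{Var}(G/n_c)/\eta_g} + \sqrt{-\ln\eta_s/(2 n_c)}$ yields a lower confidence bound on $p$. Substituting the hypothesis of the theorem shows $p \ge f$ with the claimed confidence, certifying $c$ as a frequent pattern.

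The main obstacle I expect is the variance computation of the aggregated geometric noise: one has to make sure that (i) the per-round sum over $P$ data owners truly collapses to one $G(\alpha)$ sample rather than $P$ of them (which is exactly what Theorem \ref{theorem_ddp} guarantees, but is easy to overlook), and (ii) the $m_c$ rounds aggregate additively in variance because the noises are drawn independently across rounds. A secondary subtlety is that Chebyshev is two-sided whereas we only need the upper tail $G/n_c \le t$; using the two-sided bound is still valid (it is just loose by a factor of two), which presumably explains the extra factor appearing inside the Chebyshev radical in the theorem's displayed formula.
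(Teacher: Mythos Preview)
Your proposal is correct and follows essentially the same route as the paper: split the error into a sampling part (Hoeffding on the Bernoulli sum) and a noise part (Chebyshev on the sum of $m_c$ independent two-sided geometric variables, with variance $2\alpha/(1-\alpha)^2$ each), then combine the two complementary events via independence to get the $(1-\eta_g)(1-\eta_s)$ confidence. One small clarification on the factor of $2$ you flagged: the paper invokes a \emph{one-sided} Chebyshev bound $\mathbb{P}(\bar X-\mu\ge x)\le \mathrm{Var}(X)/(2nx^2)$, which is legitimate here because the aggregated geometric noise is symmetric about $0$; this makes the Chebyshev margin \emph{smaller} by $\sqrt{2}$ than what the plain two-sided bound would give, so to recover the exact constant in the theorem you should use the symmetry argument rather than the two-sided inequality.
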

\begin{proof}
See Appendix \ref{appendix_ucb}.
\end{proof}

\begin{theorem}[Confidence bound of non-frequent patterns]\label{theorem_lcb}
A candidate $c$ with profile $(r_c,n_c,m_c)$ is not a frequent pattern (target frequency $f$) with confidence $(1-\eta_g)(1-\eta_s)$ when
\begin{equation}
    \frac{r_c}{n_c} + \sqrt{\frac{2e^{-\epsilon/K}}{2(1-e^{-\epsilon/K})^2 P^2 m_c \eta_g}} + \sqrt{\frac{\ln \eta_s}{-2n_c}} \leq f.
\end{equation}
\end{theorem}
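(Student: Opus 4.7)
The plan is to mirror the proof of Theorem~\ref{theorem_ucb} (already referenced in the appendix), flipping every inequality and sign. Let $f_c$ denote the true population frequency of candidate $c$, let $\hat{f}_c$ denote the mean of the $n_c = P m_c$ clean indicator responses (before any geometric noise is added), and let $\tilde{f}_c = r_c/n_c$ denote what the analyst actually observes. The central decomposition is
\begin{equation*}
\tilde{f}_c - f_c \;=\; \underbrace{(\tilde{f}_c - \hat{f}_c)}_{\text{noise error}} \;+\; \underbrace{(\hat{f}_c - f_c)}_{\text{sampling error}},
\end{equation*}
and the goal is to show that, with probability at least $(1-\eta_g)(1-\eta_s)$, we have $\tilde{f}_c - f_c \geq -(t_g + t_s)$, where $t_g$ and $t_s$ are the two square-root terms appearing in the theorem.

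For the noise error, I would invoke Theorem~\ref{theorem_ddp} once per round: in each of the $m_c$ rounds in which $c$ is polled, the $P$ contributions $\mathcal{X}_j - \mathcal{Y}_j$ collapse to a single two-sided geometric variable $G(e^{-\epsilon/K})$, so the aggregated noise over all rounds is a sum of $m_c$ independent two-sided geometric variables. Using $\mathrm{Var}(G(\alpha)) = 2\alpha/(1-\alpha)^2$ with $\alpha = e^{-\epsilon/K}$, the variance of the averaged noise becomes $2\alpha/((1-\alpha)^2 P^2 m_c)$. Because $G$ is symmetric about zero, a one-sided Chebyshev tail costs exactly half of the two-sided tail, which is precisely what produces the extra factor of two in the denominator of $t_g = \sqrt{2e^{-\epsilon/K}/(2(1-e^{-\epsilon/K})^2 P^2 m_c \eta_g)}$. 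This step yields $\mathbb{P}(\tilde{f}_c - \hat{f}_c \geq -t_g) \geq 1 - \eta_g$.

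For the sampling error, each clean response is a $\{0,1\}$-valued indicator with mean $f_c$, and the $n_c$ responses come from uniformly chosen data owners. A one-sided Hoeffding bound on these bounded variables yields $\mathbb{P}(\hat{f}_c - f_c \geq -t_s) \geq 1 - \eta_s$ with $t_s = \sqrt{\ln\eta_s/(-2n_c)}$. Since the P\'olya noise is drawn independently of the data-owner sampling, the two events above are independent, so the joint event has probability at least $(1-\eta_g)(1-\eta_s)$. On that event $\tilde{f}_c - f_c \geq -(t_g + t_s)$, and combining with the hypothesis $\tilde{f}_c + t_g + t_s \leq f$ forces $f_c \leq f$, i.e., $c$ is not a frequent pattern, with exactly the advertised confidence.

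The main obstacle is not conceptual but careful bookkeeping: flipping every direction consistently when mirroring the UCB proof, and pinning down the precise multiplicative constant in $t_g$ via the symmetry of $G(\alpha)$ (otherwise one ends up off by a factor of two in $\eta_g$). Everything else, including the independence between DDP noise and data-owner sampling, and the variance computation for the two-sided geometric distribution, is a direct carry-over from the UCB proof.
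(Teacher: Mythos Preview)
Your proposal is correct and follows essentially the same approach as the paper: decompose $r_c/n_c - f_c$ into a geometric-noise error and a sampling error, bound each separately with the reversed (lower-tail) Chebyshev and Hoeffding inequalities respectively, and combine to obtain the $(1-\eta_g)(1-\eta_s)$ confidence that $f_c \le f$. Your explicit use of the symmetry of $G(\alpha)$ to justify the factor $2$ in the Chebyshev denominator and your explicit appeal to independence of the noise and sampling events are slightly more careful than the paper's presentation, but the argument is otherwise identical.
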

\begin{proof}
See Appendix \ref{appendix_lcb}.
\end{proof}

A decision with three options is made for each candidate pattern in the candidate pool: if its profile meets the expression in Theorem \ref{theorem_ucb}, it will be removed from the candidate pool and recorded as a mined frequent pattern; if its profile meets the expression in Theorem \ref{theorem_lcb}, it will be just removed from the candidate pool; otherwise, the candidate will remain in the candidate pool to receive more response from data owners. In particular, in order to save the response resource, a threshold $\tau$ of the maximum response each candidate can receive is set. If a candidate has $n_c \geq \tau$ in any round, it will be forced to be filtered: it will be considered as a frequent pattern if $r_c/n_c \geq f$, or a non-frequent pattern otherwise. Algo. \ref{algo_filter} demonstrates the procedure of filtering one candidate.

After the filtering of existing candidates, the data analyst generates new candidates and places them into the candidate pool to push forward the FPM task. The generation of new candidates leverages the Apriori property of FPM, that the subpatterns of a frequent pattern must be all frequent. To generate new candidates, the data analyst checks the patterns that have been filtered as frequent patterns, and generates a new unexplored pattern as a candidate if all of its subpatterns have been filtered as frequent patterns. For example, in frequent itemset mining tasks, the data analyst will generate a new itemset $\{a,b,c\}$ when the three itemsets $\{a,b\}$, $\{a,c\}$, and $\{b,c\}$ are all filtered as frequent patterns. In frequent sequence mining tasks, a new candidate $a\rightarrow b \rightarrow c$ will be generated when $a\rightarrow b$ and $b\rightarrow c$ are filtered as frequent patterns. For frequent item mining tasks, since all the investigated items are placed in the candidate pool in the first round, no new candidate will be generated in later rounds.

\begin{algorithm}[tbp]
 \caption{Filtering candidate procedure}
 \label{algo_filter}
 \begin{algorithmic}[1]
 
 \makeatletter
\def\ALG@special@indent{%
    \ifdim\ALG@thistlm=0pt\relax
        \hskip-\leftmargin
    \else
        \hskip\ALG@thistlm
    \fi
}
\newcommand{\Emptyline}{\item[]\noindent\ALG@special@indent }
\makeatother

\renewcommand{\algorithmicrequire}{\textbf{Input:}}
\renewcommand{\algorithmicensure}{\textbf{Output:}}
\Require A candidate to be filtered: $c$.
\Ensure Response: Operation on $c$.
 \Function{\textsc{Filter}}{$c$}
 \If{($r_c,n_c,m_c$) satisfies Theorem \ref{theorem_ucb}}
 \State \textbf{return} ``Remove $c$ from $\mathcal{P}$ and add $c$ to $\mathcal{F}$''
 \ElsIf{($r_c,n_c,m_c$) satisfies Theorem \ref{theorem_lcb}}
 \State \textbf{return} ``Remove $c$ from $\mathcal{P}$''
 \ElsIf{$n_c \geq \tau$ and $r_c / n_c \geq f$}
 \State \textbf{return} ``Remove $c$ from $\mathcal{P}$ and add $c$ to $\mathcal{F}$''
 \ElsIf{$n_c \geq \tau$ and $r_c / n_c < f$}
 \State \textbf{return} ``Remove $c$ from $\mathcal{P}$''
 \EndIf
 \State \textbf{return} ``Hold $c$ in $\mathcal{P}$''
 \EndFunction
\end{algorithmic} 
\end{algorithm}

\subsection{Further enhancement on saving response budget}\label{subsec_method_dopolicy}
Although the previously presented mechanism is working, we can observe that some response budgets of data owners are still wasted. Specifically, as each data owner provides $K$ response budgets, when the total number of candidates $|\mathcal{C}_t|$ is less than $K$, the participating data owners cannot use all their budgets within that round. As some budgets are wasted, the system needs more data owners to participate, which harms the effectiveness of FedWeb in Web 3.0 scenarios. Therefore, we enhance our original design with the following two strategies, candidate padding and data owner reusing, to further reduce the data owner usage. 

\textbf{Candidate padding.}
The candidate padding strategy fills the candidate pool with the patterns that are likely to be candidates in the future, until the number of candidates becomes $K$. To realize it, the data analysts sort the existing candidates with their $r_c/n_c$ profiles in decreasing order. Then, the candidates are virtually accepted as frequent patterns one by one. Once a candidate is virtually accepted, some new candidates may be generated based on the Apriori property. These new candidates are put into the candidate pool virtually. When a padding candidate becomes a real candidate in the future, its profile is synchronized.

\textbf{Data owner reusing.} For a data owner that fails to use up its budgets in one round, the data owner reusing strategy asks the data owner to wait for future instructions, instead of completing its participation. Since a data owner must spend its response budgets on different candidates, the awaiting data owners will be reused in later rounds when there is any new candidate that is never responded to by the data owner. The data owner will be reused to respond to more candidates until its budget is used out.

\textbf{Comparison of two strategies.} Data owner reusing performs better in saving response budget, which leads to less data owner usage than candidate padding. But it requires data owners to communicate with the data analyst for many rounds, and keep waiting for the instructions during the procedure. On the other hand, candidate padding only requires one-shot communication between the data analyst and data owners, where the data owner can complete their workload in one round. But it performs less effectively in saving participating data owners.
Data owner reusing is promising in cases like cryptocurrency-related applications where data owners are likely to be always online, but unwilling to share their data freely. Candidate padding is desirable when the data owners frequently drop out, or cannot afford long awaiting, such as the mobile web applications.

The whole procedure of FedWeb has been discussed. The procedure of the data analyst (aggregation function $\mathcal{A}$) is presented in Algo. \ref{algo_analyst}.

\begin{algorithm}[tbp]
 \caption{FedWeb: Data analyst procedure}
 \label{algo_analyst}
 \begin{algorithmic}[1]
 
 \makeatletter
\def\ALG@special@indent{%
    \ifdim\ALG@thistlm=0pt\relax
        \hskip-\leftmargin
    \else
        \hskip\ALG@thistlm
    \fi
}
\newcommand{\Emptyline}{\item[]\noindent\ALG@special@indent }

\makeatother

\renewcommand{\algorithmicrequire}{\textbf{Input:}}
\renewcommand{\algorithmicensure}{\textbf{Output:}}
\Require Maximal candidates of each data owner: $K$; DDP parameter: $\epsilon$; Round-level responder for each candidate: $P$; Response number threshold: $\tau$; Target frequency: $f$.
\Ensure Frequent patterns: $\mathcal{F}$
 \State $\mathcal{F} \leftarrow$ an empty set
 \State $\mathcal{P} \leftarrow$ an empty set \Comment{Set of candidate patterns}
 \State Add basic patterns to $\mathcal{P}$, initialize profile $r_c,n_c,m_c$
 \While{$\mathcal{P}$ is not empty (round $t$)}
 \State $\mathcal{C}_t \leftarrow$ $\mathcal{P}$
 \If{\textit{candidate\_padding\_flag} and $|\mathcal{C}_t| < K$}
 \State Generate new candidates based on Apriori property and place them into $\mathcal{C}_t$, until $|\mathcal{C}_t| = K$
 \EndIf
 \State Generate indexes for candidates in $\mathcal{C}_t$
 \For{$c\in\mathcal{C}_t$}
 \State $D_c \leftarrow$ an empty set \Comment{Data owners responding $c$}
 \EndFor
 \If{\textit{data\_owner\_reusing\_flag}}
 \State Remove saved data owners that used out budgets
 \State Add available saved data owners to $D_c$
 \EndIf
 \State Add new data owners to $D_c$ until $|D_c| = P$ for all $c$
 \State $\mathcal{R}\leftarrow$ $|\mathcal{C}_t|$-length all-zero vector
 \For{each data owner $d$ (new or saved)}
 \State $\mathcal{C}^{<d>}\leftarrow \{c\in\mathcal{C}_t | d\in D_c\}$
 \State $R_d\leftarrow$ \textsc{Respond}($K,\epsilon,\mathcal{C}^{<d>}$)
 \State $\mathcal{R}\leftarrow \mathcal{R} + R_d$
 \If{\textit{data\_owner\_reusing\_flag} and $|\mathcal{C}^{<d>}| < K$}
 \State Save $d$ with its response history
 \EndIf
 \EndFor
 \For{$c\in\mathcal{C}_t$ with index $i_c$}
 \State $r_c \leftarrow r_c + \mathcal{R}[i_c]$; $n_c \leftarrow n_c + P$; $m_c \leftarrow m_c + 1$
 \EndFor
 \For{$c\in\mathcal{P}$}
 \State Operate on $c$ following \textsc{Filter}($c$)
 \EndFor
 \State Generate new candidates form $\mathcal{F}$ with Apriori property
 \State Add new candidates to $\mathcal{P}$, initialize profile $r_c,n_c,m_c$
 \EndWhile
 \State \textbf{return} $\mathcal{F}$
\end{algorithmic} 
\end{algorithm}

\section{Evaluation}\label{sec_evaluation}

We evaluate our design FedWeb with existing local privacy FPM solutions in several web FPM service scenarios, to validate the capacity and effectiveness of FedWeb.

\subsection{Experiment setting}

\textbf{Datasets.} We evaluate FedWeb in three representative web scenarios based on real-world web datasets. 
\begin{itemize}
    \item \textbf{SteemOps (frequent item mining)}: The SteemOps dataset \cite{li2021steemops} is collected from Steemit, a representative decentralized social network application following the Web 3.0 paradigm. In our formulation, each data owner is a Steemit user, and the items in the local data are the other users the data owner has ever reposted to. 
    By completing the frequent item mining task, the data analyst is able to discover these popular users.
    \item \textbf{MSNBC (frequent sequence mining)}: The MSNBC dataset \cite{cadez2000visualization} includes users' browsing trace of news in \textit{msnbc.com}. In our formulation, each data owner holds the browsing history of one MSNBC user, which is essentially a sequence of websites. 
    Mining frequent sequential patterns in the browsing trace, also called user traversal analysis, is a key application for web service analysis.
    \item \textbf{MovieLens (frequent itemset mining)}: The MovieLens dataset \cite{harper2015movielens} includes user's rating in \textit{movielens.org}. In our evaluation, we formulate the local itemset data of users as the themes of movies that a MovieLens user has ever rated full score on them, 
    and completion of the frequent itemset mining helps data analysts to analyze the correlation between different themes of movies. 
\end{itemize}

\textbf{Methods.} In addition to our designed FedWeb, we use three benchmarks for comparison, named FedFPM, SFP, and RAPPOR. We introduce these methods respectively as follows.
\begin{itemize}
    \item \textbf{FedWeb} is the proposed design in this paper. By default, we apply the \textit{data owner reusing strategy} to further reduce the number of participating data owners.
    \item \textbf{FedFPM} \cite{wang2022fedfpm} is a pioneering work in the field of FA-based FPM tasks. It completes multiple FPM tasks under a unified framework. It enforces LDP via a one-bit randomized response, which provides a local privacy guarantee, but leads to a high loss of data utility.
    \item \textbf{SFP} \cite{apple2017learning} is proposed by Apple to discover frequently used words (sequences of letters) from user keyboard inputs. 
    SFP can only handle the sequence data, and is therefore evaluated in the MSNBC dataset only.
    \item \textbf{RAPPOR} \cite{erlingsson2014rappor} is proposed by Google to gather crowd user uploads with an LDP guarantee. 
    It is originally designed for frequent item mining. We modify RAPPOR by encoding all the itemsets as distinct items to solve the frequent itemset mining task. RAPPOR is evaluated in SteemOps and MovieLens datasets.
\end{itemize}

\newcommand{\resfigheight}{0.18\textwidth}
\begin{figure}[!tbp]
    \centering
    \subfigure[F1 score, SteemOps]
    {
        \includegraphics[height=\resfigheight]{./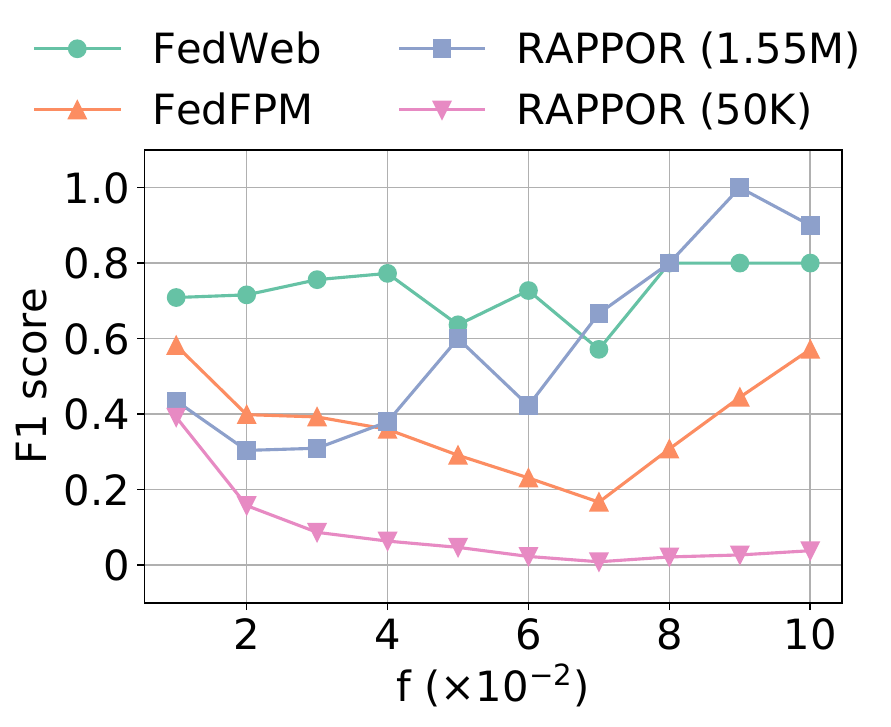}
        \label{subfig_steemops_f1}
    }
    \subfigure[Data owner usage, SteemOps]
    {
        \includegraphics[height=\resfigheight]{./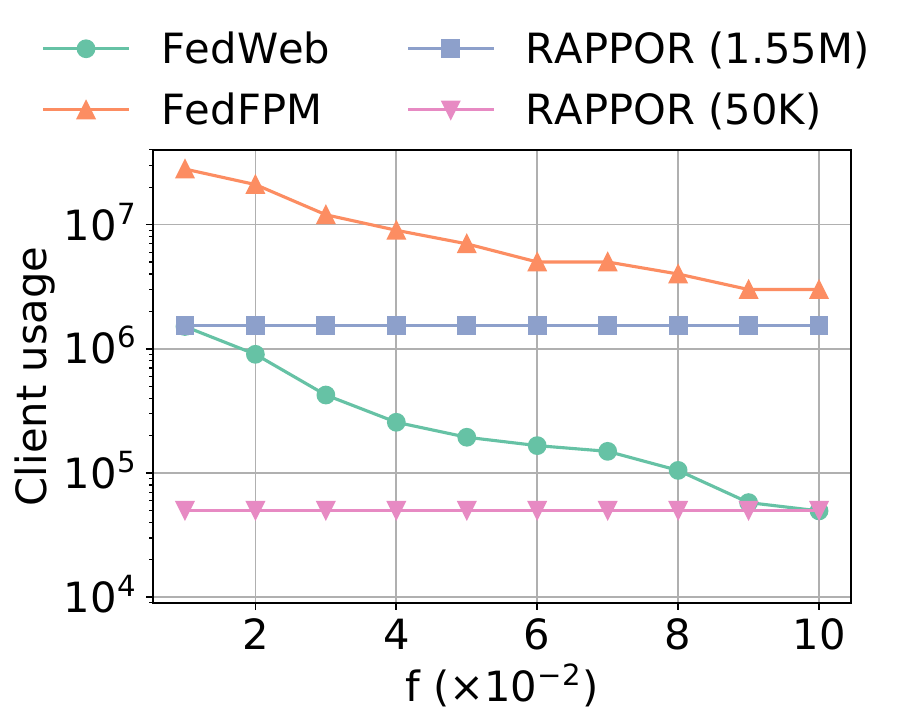}
        \label{subfig_steemops_usage}
    }
    \newline
    \subfigure[F1 score, MSNBC]
    {
        \includegraphics[height=\resfigheight]{./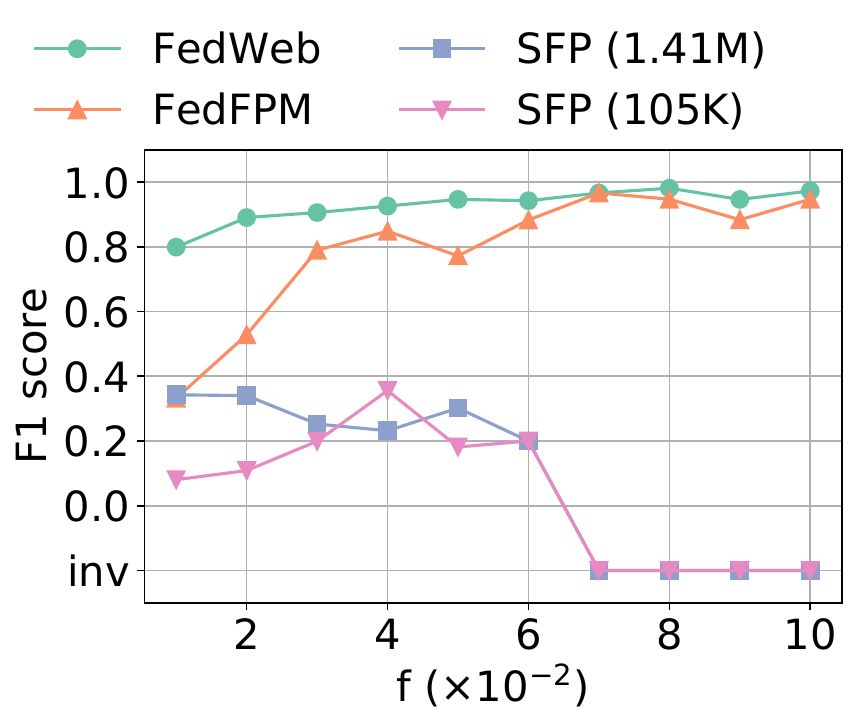}
        \label{subfig_msnbc_f1}
    }
    \subfigure[Data owner usage, MSNBC]
    {
        \includegraphics[height=\resfigheight]{./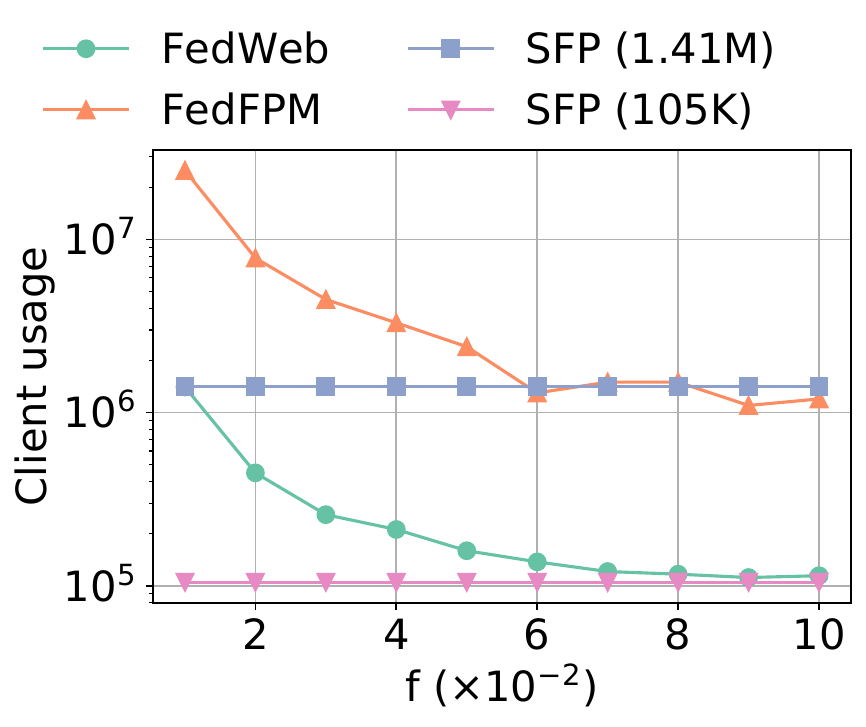}
        \label{subfig_msnbc_usage}
    }
    \newline
    \subfigure[F1 score, MovieLens]
    {
        \includegraphics[height=\resfigheight]{./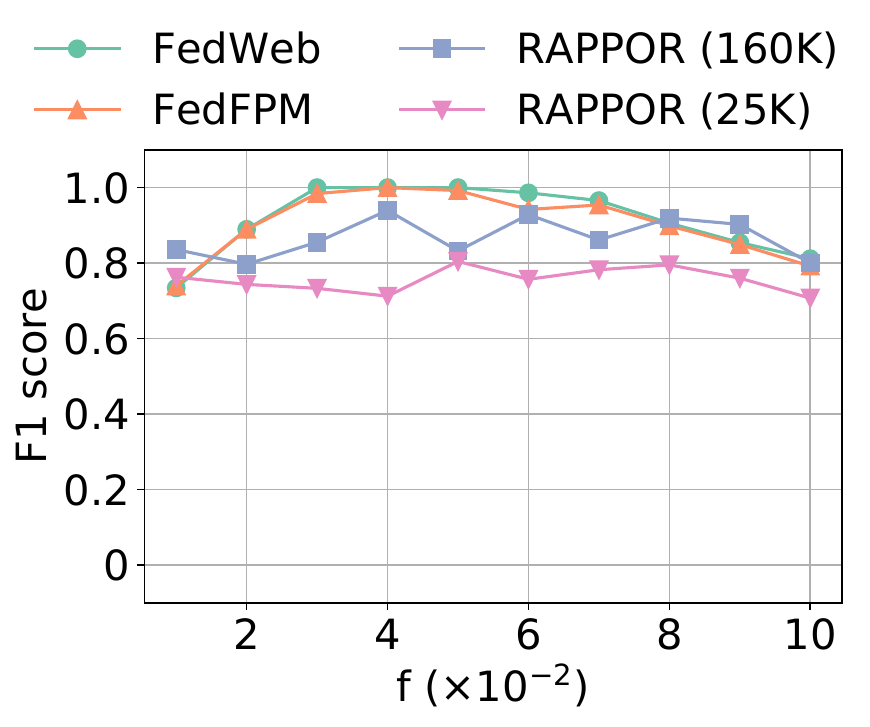}
        \label{subfig_movielens_f1}
    }
    \subfigure[Data owner usage, MovieLens]
    {
        \includegraphics[height=\resfigheight]{./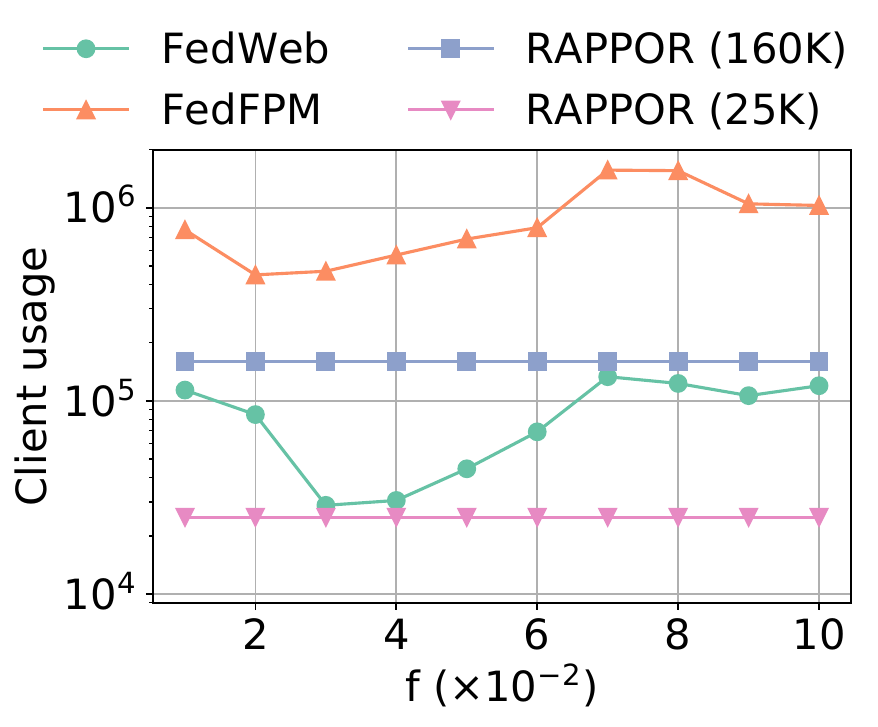}
        \label{subfig_movielens_usage}
    }
    \caption{Performance of FedWeb and the benchmarks in three datasets. The legends of RAPPOR and SFP mark their numbers of participating data owners. We mark invalid (``inv'') if no frequent pattern is output from an algorithm.}
    \label{fig_perf_all}
\end{figure}

\textbf{Metrics.} Under specific DP requirement $\epsilon$, the performance of an FPM algorithm is determined by the data utility (correctness of FPM result, represented by F1 score) and the total number of participating data owners. An effective algorithm should gain high F1 scores using few data owners.

\textbf{Parameters.} We set the target frequencies of FPM to be between 0.01 and 0.1. The number of responses each candidate receives in each round is set to $P=1,000$. The maximal candidate of each data owner is set to $K=50$. The DP parameters of the schemes are all set to $\epsilon=2.0$. The confidence levels used in response analysis are set to $\xi_g=0.01, \xi_s=0.01$. The remaining parameters of FedFPM, SFP, and RAPPOR all follow the default settings in their original papers. As SFP and RAPPOR should predefine the total data owners, we set two values of the data owner number for each RAPPOR/SFP setting: one is slightly higher than the maximal usage of FedWeb, and another is slightly lower than the minimum usage of FedWeb. The first setting represents the case when FedWeb does not gain an unfair advantage by utilizing more data owners, and the latter setting investigates whether the first setting makes the data owner numbers unreasonably high.

\subsection{Experiment results}

\textbf{Performance of FedWeb and benchmarks.} We execute the FPM algorithms in the three scenarios, and record their F1 score of finally mined patterns, and the total number of participating data owners. The results in three datasets are shown in Fig. \ref{fig_perf_all}. FedWeb outperforms RAPPOR, SFP, and FedFPM in F1 scores, even if they utilize more data owners than FedWeb. By averaging the F1 score over the ten target frequencies, FedWeb achieved an improvement in F1 score of at least 25.3\%, 17.3\%, and 1.2\% in SteemOps, MSNBC, and MovieLens datasets, respectively. FedWeb also performs outstandingly in reducing the required data owners. Compared to FedFPM, FedWeb saved 81.1\%$\sim$98.4\% of participating data owners. Lastly, in Figs. \ref{subfig_movielens_f1} and \ref{subfig_steemops_f1}, the decrease of RAPPOR performance with the data owner numbers shows that RAPPOR is still data hungry, and the comparison we make is fair for RAPPOR. On the other hand, the performance of SFP is extremely low, no matter what the data owner number is. It indicates the ineffectiveness of SFP in our tasks.

\begin{figure}[!tbp]
    \centering
    \subfigure[F1 score]
    {
        \includegraphics[height=\resfigheight]{./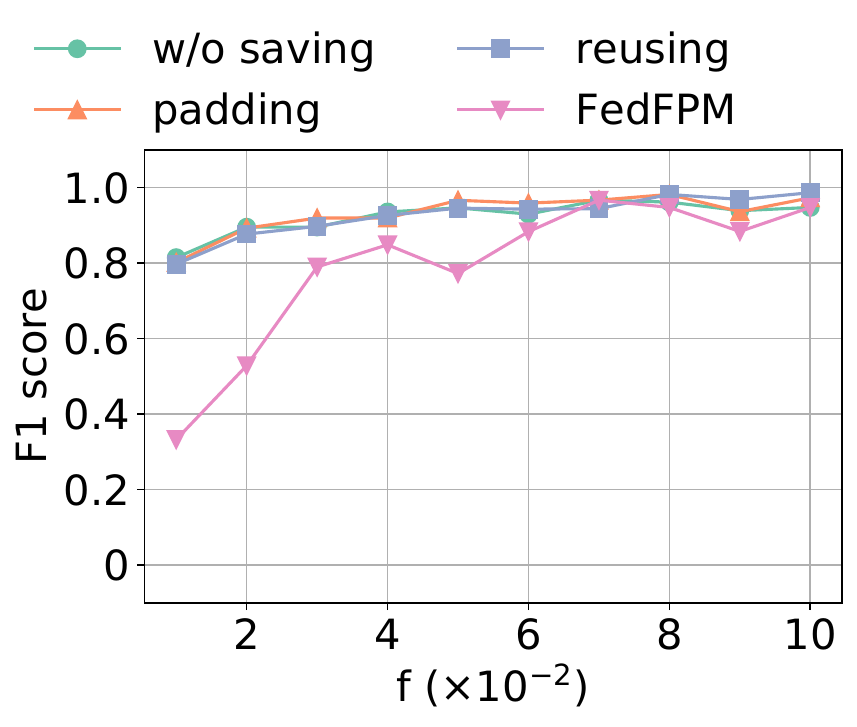}
        \label{subfig_msnbc_dopolicy_f1}
    }
    \subfigure[Data owner usage]
    {
        \includegraphics[height=\resfigheight]{./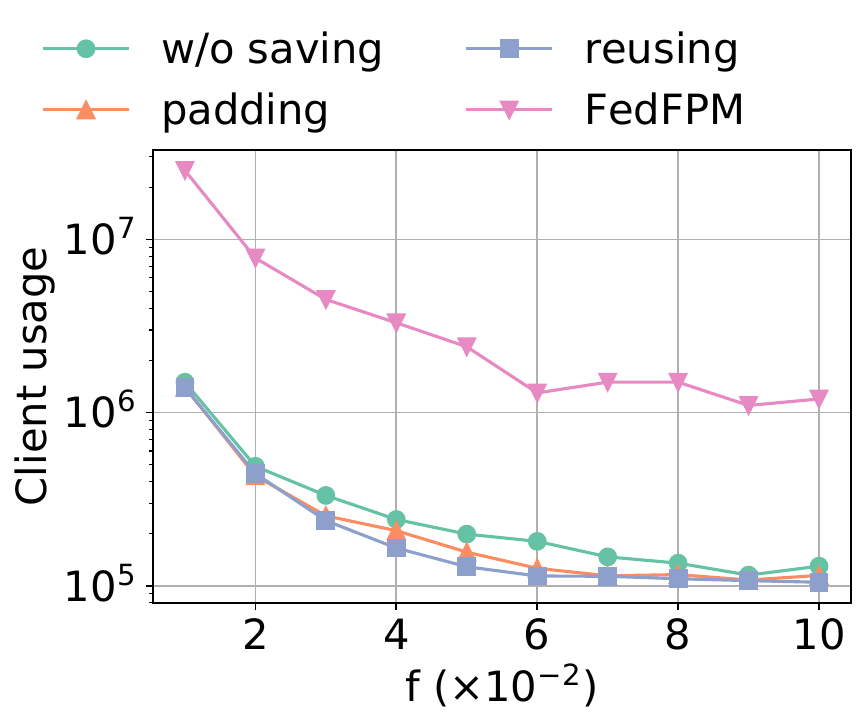}
        \label{subfig_msnbc_dopolicy_usage}
    }
    \newline
    \caption{Performance of FedWeb with different response budget saving strategies in MSNBC dataset. The candidate padding and data owner reusing strategies are abbreviated as ``padding'' and ``reusing'', respectively.}
    \label{fig_perf_dopolicy_msnbc}
\end{figure}

\textbf{Performance of response budget saving strategies.} 
To test the effects of the two response budget saving strategies in Section \ref{subsec_method_dopolicy}, we evaluate the vanilla FedWeb, candidate padding, and data owner reusing in the MSNBC scenario.\footnote{The experiment results similar to Figs. \ref{fig_perf_dopolicy_msnbc} and \ref{fig_perf_heat} conducted in other datasets, which also support our analysis, are omitted due to space limit.} The results are shown in Fig. \ref{fig_perf_dopolicy_msnbc}. Compared to the vanilla FedWeb without further response budget saving, the candidate padding strategy can reduce the total participating data owners by 6.3\%$\sim$30.1\% (16.2\% on average); the data owner reusing strategy can reduce by 6.9\%$\sim$36.9\% (21.6\% on average); and neither of the strategies affects F1 score. Data owner reusing strategy can obtain a better result in saving participating data owners, but it requires data owners to await for multiple rounds. In our experiments, a data owner under the data owner reusing strategy should participate in 3.60 rounds on average. In addition, FedWeb is able to remarkably outperform FedFPM, even without any response budget saving strategy.

\newcommand{\heatfigheight}{0.24\textwidth}
\begin{figure}[!tbp]
    \centering
    \subfigure[F1 score]
    {
        \includegraphics[height=\heatfigheight]{./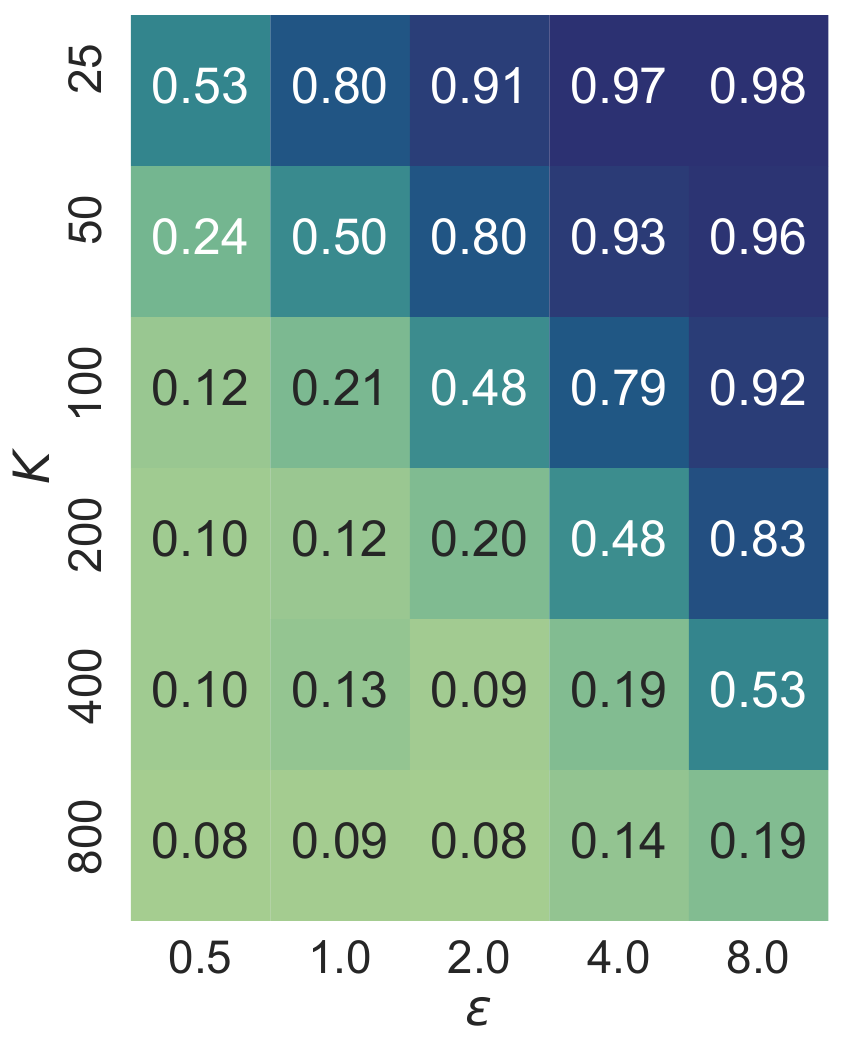}
        \label{subfig_msnbc_heat_f1}
    }
    \subfigure[Data owner usage ($\times 10^5$)]
    {
        \includegraphics[height=\heatfigheight]{./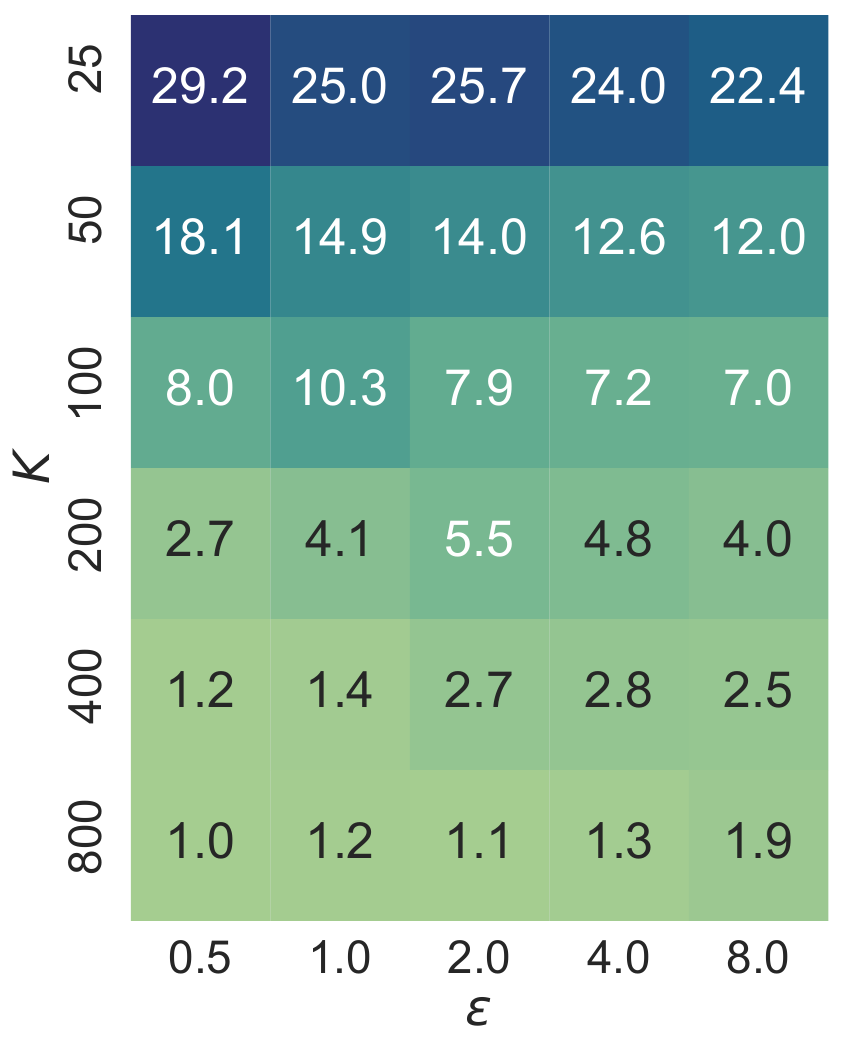}
        \label{subfig_msnbc_heat_usage}
    }
    \caption{Performance of FedWeb under different $\epsilon$ and $K$ in MSNBC dataset.}
    \label{fig_perf_heat}
\end{figure}

\textbf{Parametric study.} 
The DDP parameter $\epsilon$, and the maximum candidate of each data owner $K$ are a pair of important parameters that greatly affect the performance of FedWeb.
Theoretically, a larger $\epsilon/K$ leads to a better data utility, because it determines the noise added to each candidate response, and a larger $K$ leads to fewer data owner usage, because fewer data owners are required when each data owner responds to more candidates. We conduct experiments on the MSNBC datasets with target frequency $f=0.01$ under different $\epsilon$ and $K$ settings, and summarize the resulting F1 scores and data owner usages in Fig. \ref{fig_perf_heat}. By analyzing the F1 score results in Fig. \ref{subfig_msnbc_heat_f1}, we verify our theoretical analysis, that a similar F1 score can be achieved when the value of $\epsilon/K$ is fixed, and the F1 score is higher when $\epsilon/K$ is higher. In Fig. \ref{subfig_msnbc_heat_usage}, we found that when $\epsilon$ is fixed, a higher $K$ can reduce the total required data owners. The parameter $K$ is an important handler for data analysts: when the privacy requirement $\epsilon$ is given, $K$ can be used to tradeoff data utility and data owner usage, where a higher data utility can be achieved by reducing $K$, and a smaller data owner usage can be achieved by increasing $K$.

\section{Related Work}\label{sec_related}

\textbf{Data analytics in Web 3.0.} While the Web 3.0 paradigm emerges, researchers start to place interest in digesting great value from the data generated by Web 3.0. In these works, data from decentralized applications \cite{li2021steemops,DBLP:conf/osdi/LiZZ0XA021} or cryptocurrency blockchains \cite{meiklejohn2013fistful,spagnuolo2014bitiodine,moser2014towards} are gathered for further data analytics or public dataset release. However, they exactly fetch the public data stored in the blockchain, following the least privacy scheme shown in Fig. \ref{subfig_concept_2}. 
Compared to the previous works, this paper presents the first work to introduce FA as a building block in privacy-preserving Web 3.0 data analytics, achieving formal privacy guarantees for web data analytics tasks.

\textbf{Federated analytics.} FA is first introduced by Google to perform data analytics without exposing raw data \cite{fa20}. As for now, FA algorithms are task-specific, where each FA algorithm can be applied to one (or a class of) data analytics problems. The FA model has been applied in various classical and important data analytics tasks, including heavy hitter discovery \cite{zhu2020federated,acharya2019communication}, set computation \cite{pinkas2019spot}, sample mean/median estimation \cite{cormode2021bit,bohler2020secure}, clustering \cite{dennis2021heterogeneity}, and FPM \cite{wang2022fedfpm}. In this paper, we also advance the federated FPM studies by proposing a scalable FA design. 
The idea of utilizing FA for data analytics for Web 3.0 also sheds light on new application scenarios for FA studies. 

\textbf{Privacy-preserving FPM.} There exists a line of works studying FPM under privacy preservation. The early study focuses on resolving one subproblem of FPM, including frequent item mining \cite{erlingsson2014rappor,qin2016heavy}, frequent itemset mining \cite{wang2018locally}, and frequent sequence mining \cite{apple2017learning}. FedFPM \cite{wang2022fedfpm} proposes the first FA framework to unify all FPM subproblems and is the most related work to our problem. However, compared with FedFPM,
FedWeb is Web 3.0-oriented with a special focus on reducing the participating data owners. It utilizes DDP that achieves a formal privacy guarantee without significant data utility loss, derives complex confidence bounds to filter candidates to adapt DDP, and proposes flexible budget saving strategies to further reduce participating data owners.

\section{Conclusion}\label{sec_conclusion}
In this paper, we address the problem of privacy-preserving data analytics on local sensitive data in Web 3.0 systems. 
We propose a reform to the current Web 3.0 data publication scheme, namely, instead of transmitting raw sensitive data, data owners provide FA service to the data analyst. Based on the task-specific FA paradigm, we further design a novel solution for privacy-preserving FPM in Web 3.0 scenarios. The proposed FedWeb mechanism judiciously incorporates DDP into its response scheme with theoretically-sound candidate generation/filtering mechanisms and flexible response budget saving strategies, to achieve reliable privacy preservation, high data utility, and, what is important for Web 3.0, fewer participating data owners. Experiment results show that, our solution can achieve $\sim 25.3\%$ higher F1 score and $\sim 98.4\%$ fewer consumed data owners compared to the benchmarks.

\appendices
\section{Proof of Theorem \ref{theorem_ucb}}\label{appendix_ucb}

Noticing that $\mathcal{X}-\mathcal{Y}$ is with expectation 0, $r_c$ is expected to be increased by 1 if the responding data owner possesses the candidate $c$, and 0 otherwise. Therefore, $r_c / n_c$ is an unbiased estimation of the true frequency of $c$ (denoted $f_c$).

There are exactly two sources of random error on $r_c / n_c$: the \textbf{sampling error} caused by drawing the data owners to respond on a candidate, and the \textbf{geometric error} caused by the geometric noise of the uploads. Denote the two errors as $\mathcal{E}_s$ and $\mathcal{E}_g$, respectively. The relationship between our estimation and the true candidate frequency can be modeled as follows.
\begin{equation}\label{eq_twoerror}
    \frac{r_c}{n_c} = f_c + \mathcal{E}_s + \mathcal{E}_g.
\end{equation}

To start with, we bound the magnitude of $\mathcal{E}_g$. 
$\mathcal{E}_g$ is introduced by the two-sided geometric noise (constructed by distributed P\'olya noises) which is added once in each round. Therefore, $m_c$ noises are added into $r_c$, where each of them follows the distribution in \eqref{eq_geo} with $\alpha=e^{-\epsilon/K}$.

The variance of the added two-sided geometric noise added in each round is calculated.
\begin{equation}
    Var(G(\alpha)) = \frac{2\alpha}{(1-\alpha)^2}.
\end{equation}

By analyzing the essence of $\mathcal{E}_g$, we know that it is 
\begin{equation}
    \mathcal{E}_g = \frac{\sum^{m_c} G(\alpha)}{n_c}=\frac{\sum^{m_c} G(\alpha)}{P m_c}=\frac{\sum^{m_c} \frac{G(\alpha)}{P}}{ m_c},
\end{equation}
exactly the average of $m_c$ random variables following the distribution $\frac{G(\alpha)}{P}$. The variance of $\frac{G(\alpha)}{P}$ is, obviously, $\frac{2\alpha}{P^2 (1-\alpha)^2}$. We then utilize the Chebyshev's inequality \cite{wang2015sublinear} to bound $\mathcal{E}_g$, the average of i.i.d. random variables with known variance.
\begin{theorem}[Chebyshev's inequality for deviation of random variables average] Let $X_1, X_2, ...,X_n$ be i.i.d. random variables with expectation $\mu$ and variance $Var(X)$, we have
\begin{equation}\label{eq_cheby}
    \mathbb{P}\bigg(\frac{\sum_{i=1}^n X_i}{n} - \mu \geq x\bigg) \leq \frac{Var(X)}{2nx^2}
\end{equation}
for any $x>0$.
\end{theorem}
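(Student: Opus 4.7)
The plan is a Markov-plus-second-moment argument tailored to the sample mean, with the factor of $\tfrac{1}{2}$ in the denominator being the only nontrivial subtlety.

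First I would set $\bar{X}_n := \tfrac{1}{n}\sum_{i=1}^n X_i$ and use linearity of expectation together with the i.i.d.\ hypothesis to obtain $\mathbb{E}[\bar{X}_n]=\mu$ and $Var(\bar{X}_n)=Var(X)/n$. Next I would apply Markov's inequality to the nonnegative random variable $(\bar{X}_n-\mu)^2$, which gives $\mathbb{P}\bigl((\bar{X}_n-\mu)^2 \geq x^2\bigr) \leq \mathbb{E}[(\bar{X}_n-\mu)^2]/x^2 = Var(X)/(n x^2)$, i.e.\ the familiar two-sided Chebyshev bound $\mathbb{P}(|\bar{X}_n-\mu|\geq x)\leq Var(X)/(n x^2)$. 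Nothing here is more than bookkeeping once the variance of the sample mean is in hand.

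The hard part is recovering the factor of $2$ in the denominator of the stated one-sided bound, because for a generic distribution the one-sided event $\{\bar{X}_n-\mu\geq x\}$ is only crudely upper-bounded by its two-sided counterpart $\{|\bar{X}_n-\mu|\geq x\}$, and that crude bound loses precisely a factor of $2$. The cleanest route to close this gap is to invoke \emph{symmetry} of $X_i$ about $\mu$: in the intended use of this theorem in Appendix~\ref{appendix_ucb}, the role of $X_i$ is played by $G(\alpha)/P$, which is symmetric about $0$ by construction of the two-sided geometric noise, so the upper and lower tails of $\bar{X}_n-\mu$ coincide. Under that symmetry $\mathbb{P}(\bar{X}_n-\mu\geq x)=\tfrac{1}{2}\mathbb{P}(|\bar{X}_n-\mu|\geq x)$, and combining this with the two-sided bound above yields exactly $Var(X)/(2 n x^2)$.

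I would flag that the theorem as written implicitly depends on this symmetry, which is the main obstacle to stating it cleanly: if one does not assume symmetric $X_i$, then the factor $\tfrac{1}{2}$ is no longer free, and one would have to either drop it or route instead through Cantelli's one-sided inequality, weakening its bound $Var(\bar{X}_n)/(Var(\bar{X}_n)+x^2)$ to match the form $Var(X)/(2n x^2)$. Because the intended application here does supply symmetry automatically, I would keep the symmetric-case proof and simply note this hypothesis explicitly when invoking the theorem.
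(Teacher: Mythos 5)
Your proof is correct for the symmetric case, and your diagnosis of the factor of $2$ is the key point: the paper itself offers no proof of this statement at all (it is quoted with a citation to a textbook), so there is nothing to match your argument against, and your reconstruction is in fact more careful than what the paper provides. The Markov-on-$(\bar{X}_n-\mu)^2$ step and the variance computation $Var(\bar{X}_n)=Var(X)/n$ are standard and fine. You are also right that the theorem as literally stated is \emph{false} for general i.i.d.\ variables with finite variance: taking $n=1$ and $X$ Bernoulli with $p=0.1$, the choice $x=0.85$ gives $\mathbb{P}(X-\mu\geq x)=0.1$ while $Var(X)/(2x^2)\approx 0.062$, so the claimed bound \eqref{eq_cheby} cannot hold without an extra hypothesis; for $x>\sigma$ it is even strictly stronger than Cantelli's inequality, which is the best one-sided bound available from the variance alone. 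Your repair via symmetry is the right one and is available in the intended application: in Appendix \ref{appendix_ucb} the variables are i.i.d.\ copies of $G(\alpha)/P$, which are symmetric about $0$ by \eqref{eq_geo}, so the sum is symmetric, the upper and lower tails of $\bar{X}_n-\mu$ coincide, and $\mathbb{P}(\bar{X}_n-\mu\geq x)=\tfrac{1}{2}\mathbb{P}(|\bar{X}_n-\mu|\geq x)\leq Var(X)/(2nx^2)$ follows from the two-sided bound. The one thing I would add is that the same hypothesis must be checked when the ``reversed'' version is invoked in Appendix \ref{appendix_lcb}; it holds there for the same reason. So: correct proof, correctly identified missing hypothesis, and the hypothesis is satisfied where the theorem is actually used.
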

By applying $\mathcal{E}_g$ into \eqref{eq_cheby}, we have
\begin{equation}\label{eq_geoerr1}
    \mathbb{P}(\mathcal{E}_g \geq x) \leq \frac{2\alpha}{2(1-\alpha)^2 P^2 m_c x^2}.
\end{equation}
We set $\eta_g$ as the allowed error rate for geometric error bounding, and recall $\epsilon/K=-\ln \alpha$ to satisfy $\epsilon$-DP, \eqref{eq_geoerr1} can be transformed into
\begin{equation}\label{eq_geoerr3}
    \mathbb{P}\bigg(\mathcal{E}_g \geq \sqrt{\frac{2e^{-\epsilon/K}}{2(1-e^{-\epsilon/K})^2 P^2 m_c \eta_g}}\bigg) \leq \eta_g.
\end{equation}

Then, we try to bound $\mathcal{E}_s$, the error caused by the sampling effect. The bounding is based on an important observation that, $f_c + \mathcal{E}_s$, is equivalent to the frequency of candidate $c$ on a sample of $n_c$ data owners. As a result, $f_c + \mathcal{E}_s$ is an average of $n_c$ random variables, where each variable takes two values 0 or 1 (Bernoulli).

Based on the observation, we use the Hoeffding's inequality \cite{wang2015sublinear} to bound the divergence between $f_c + \mathcal{E}_s$ and $f_c$.
\begin{theorem}[Hoeffding's inequality]
Let $X_1, X_2, ...,X_n\in[0,1]$ be i.i.d. random variables with expectation $\mu$, we have
\begin{equation}
    \mathbb{P}\bigg(\frac{\sum_{i=1}^n X_i}{n} - \mu \geq x\bigg) \leq \exp(-2x^2 n)
\end{equation}
for any $x$.
\end{theorem}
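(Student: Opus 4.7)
The plan is to prove this classical concentration bound via the Cramer-Chernoff method, which converts the tail probability into a moment generating function (MGF) inequality via Markov, factorizes using independence, and then controls each factor through Hoeffding's lemma on bounded random variables.

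First, I would rewrite the event $\{\bar X - \mu \geq x\}$ as $\{\sum_{i=1}^n (X_i - \mu) \geq nx\}$ and apply Markov's inequality to the exponentiated sum: for any free parameter $s > 0$,
\begin{equation}
\mathbb{P}\bigg(\sum_{i=1}^n (X_i - \mu) \geq nx\bigg) \leq e^{-snx}\,\mathbb{E}\bigl[e^{s\sum_i(X_i-\mu)}\bigr].
\end{equation}
By independence of the $X_i$, the expectation on the right factorizes into $\prod_{i=1}^n \mathbb{E}[e^{s(X_i-\mu)}]$, reducing the problem to bounding the MGF of a single centered summand.

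Second, I would invoke Hoeffding's lemma: for any zero-mean random variable $Z$ taking values in $[a,b]$, we have $\mathbb{E}[e^{sZ}] \leq \exp(s^2(b-a)^2/8)$. Since each $X_i \in [0,1]$ with $\mathbb{E}[X_i - \mu]=0$, the centered variable $X_i - \mu$ lies in an interval of length exactly $1$, so each MGF factor is bounded by $e^{s^2/8}$. Substituting back gives $\mathbb{P}(\bar X - \mu \geq x) \leq \exp(-snx + ns^2/8)$, and optimizing over $s>0$ (the minimum occurs at $s = 4x$) yields the stated bound $\exp(-2nx^2)$.

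The main obstacle is Hoeffding's lemma itself; everything else is routine Cramer-Chernoff bookkeeping. The standard route exploits the convexity of $z \mapsto e^{sz}$: for $Z \in [a,b]$, $e^{sZ} \leq \frac{b - Z}{b-a} e^{sa} + \frac{Z-a}{b-a} e^{sb}$. Taking expectations with $\mathbb{E}[Z] = 0$ and reparametrizing via $u = s(b-a)$ and $p = -a/(b-a)$ yields the upper bound $\exp(\varphi(u))$ where $\varphi(u) = \ln(1-p + p e^u) - pu$. One then needs to show $\varphi(u) \leq u^2/8$ uniformly in $u$, which I would do by verifying $\varphi(0)=\varphi'(0) = 0$ and bounding $\varphi''(u) \leq 1/4$ through the identity $\varphi''(u) = q(u)(1 - q(u))$ for an appropriate probability $q(u) \in (0,1)$, together with the elementary inequality $t(1-t) \leq 1/4$; a Taylor expansion with remainder then closes the argument.
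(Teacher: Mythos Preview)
Your argument is the standard Cram\'er--Chernoff derivation of Hoeffding's inequality and is correct: Markov on the exponentiated sum, factorization by independence, Hoeffding's lemma via convexity and the second-derivative bound $\varphi''(u)=q(u)(1-q(u))\le 1/4$, and optimization at $s=4x$ all go through exactly as you describe. One minor caveat is that the optimization step implicitly assumes $x>0$ (so that the optimizer $s=4x$ is positive); the statement as written says ``for any $x$'', but the bound is only meaningful for $x>0$ anyway.

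There is, however, nothing to compare against: the paper does not prove this theorem. It is quoted verbatim as a classical tool, with attribution to the reference \cite{wang2015sublinear}, and then immediately applied to bound the sampling error $\mathcal{E}_s$. So your proposal supplies strictly more detail than the paper, which treats Hoeffding's inequality as a black-box citation.
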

By considering $f_c + \mathcal{E}_s$ as the average of $n_c$ 0-1 bounded variables, we have
\begin{equation}\label{eq_samerr2}
    \mathbb{P}(f_c + \mathcal{E}_s - f_c \geq x) = \mathbb{P}(\mathcal{E}_s \geq x) \leq \exp(-2x^2 n_c).
\end{equation}
We set $\eta_s$ as the allowed error rate for sampling error bounding. Eq. \eqref{eq_samerr2} can be transformed into
\begin{equation}\label{eq_samerr3}
    \mathbb{P}\bigg(\mathcal{E}_s \geq \sqrt{\frac{\ln \eta_s}{-2n_c}}\bigg) \leq \eta_s.
\end{equation}

By summarizing \eqref{eq_geoerr3} and \eqref{eq_samerr3}, we can derive a confidence of $(1-\eta_s)(1-\eta_g)$ that both the expression in \eqref{eq_geoerr3} and \eqref{eq_samerr3} do not hold. After that, we can use \eqref{eq_twoerror} derive a bound of $f_c$ with confidence $(1-\eta_s)(1-\eta_g)$ that
\begin{align}\nonumber
    \mathbb{P}&\bigg(f_c \geq \frac{r_c}{n_c} - \sqrt{\frac{2e^{-\epsilon/K}}{(1-e^{-\epsilon/K})^2 P^2 m_c \eta_g}} - \sqrt{\frac{\ln \eta_s}{-2n_c}} \bigg)\\ 
    &\geq (1-\eta_s)(1-\eta_g).\label{eq_proofbound1}
\end{align}
Eq. \eqref{eq_proofbound1} can be transformed into a confidence bound of $c$ being a frequent pattern when the bound of $f_c$ (RHS of the first line of \eqref{eq_proofbound1}) is larger than $f$, which concludes the proof.

\section{Proof of Theorem \ref{theorem_lcb}}\label{appendix_lcb}

The proof of Theorem \ref{theorem_lcb} follows the same procedure as Appendix \ref{appendix_ucb}. A reversed version of \eqref{eq_geoerr3} can be derived by applying the reversed version of the Chebyshev's inequality.
\begin{equation}\label{eq_geoerr3rev}
    \mathbb{P}\bigg(\mathcal{E}_g \leq -\sqrt{\frac{2e^{-\epsilon/K}}{2(1-e^{-\epsilon/K})^2 P^2 m_c \eta_g}}\bigg) \leq \eta_g.
\end{equation}
Similarly, a reversed version of \eqref{eq_samerr3} can be derived by applying the reversed version of the Hoeffding's inequality.
\begin{equation}\label{eq_samerr3rev}
    \mathbb{P}\bigg(\mathcal{E}_s \leq -\sqrt{\frac{\ln \eta_s}{-2n_c}}\bigg) \leq \eta_s.
\end{equation}
By summarizing \eqref{eq_geoerr3rev} and \eqref{eq_samerr3rev}, considering the joint case that both expressions do not hold, and substituting into \eqref{eq_twoerror}, we can derive a bound similar to \eqref{eq_proofbound1}:
\begin{align}\nonumber
    \mathbb{P}&\bigg(f_c \leq \frac{r_c}{n_c} + \sqrt{\frac{2e^{-\epsilon/K}}{(1-e^{-\epsilon/K})^2 P^2 m_c \eta_g}} + \sqrt{\frac{\ln \eta_s}{-2n_c}} \bigg)\\ 
    &\geq (1-\eta_s)(1-\eta_g).\label{eq_proofbound2}
\end{align}

Eq. \eqref{eq_proofbound2} can be transformed into a confidence bound of $c$ not being a frequent pattern when the bound of $f_c$ is smaller than $f$, which concludes the proof.


\begin{thebibliography}{10}
\providecommand{\url}[1]{#1}
\csname url@samestyle\endcsname
\providecommand{\newblock}{\relax}
\providecommand{\bibinfo}[2]{#2}
\providecommand{\BIBentrySTDinterwordspacing}{\spaceskip=0pt\relax}
\providecommand{\BIBentryALTinterwordstretchfactor}{4}
\providecommand{\BIBentryALTinterwordspacing}{\spaceskip=\fontdimen2\font plus
\BIBentryALTinterwordstretchfactor\fontdimen3\font minus
  \fontdimen4\font\relax}
\providecommand{\BIBforeignlanguage}[2]{{%
\expandafter\ifx\csname l@#1\endcsname\relax
\typeout{** WARNING: IEEEtran.bst: No hyphenation pattern has been}%
\typeout{** loaded for the language `#1'. Using the pattern for}%
\typeout{** the default language instead.}%
\else
\language=\csname l@#1\endcsname
\fi
#2}}
\providecommand{\BIBdecl}{\relax}
\BIBdecl

\bibitem{web3jp}
{Liberal Democratic Party of Japan}, ``Japan’s nft strategy for the web 3.0
  era,''
  \url{https://www.taira-m.jp/Japan\%27s\%20NFT\%20Whitepaper\_E\_050122.pdf},
  2023.

\bibitem{web3eth}
{Ethereum Organization}, ``Introduction to web3,''
  \url{https://ethereum.org/en/web3/}, 2023.

\bibitem{web3market}
{Grand View Research}, ``Web 3.0 blockchain market size, share \& trends
  analysis report by blockchain type, by application (cryptocurrency,
  conversational ai, data \& transaction storage), by end use, by region, and
  segment forecasts, 2023 - 2030,''
  \url{https://www.grandviewresearch.com/industry-analysis/web-3-0-blockchain-market-report#},
  2023.

\bibitem{brave}
{Brave Software, Inc.}, ``Brave,'' \url{https://brave.com/}, 2023.

\bibitem{sun2004efficient}
L.~Sun and X.~Zhang, ``Efficient frequent pattern mining on web logs,'' in
  \emph{Proc. Asia-Pacific Web Conf.}, Hangzhou, China, Apr. 2004, pp.
  533--542.

\bibitem{wrio}
{WRIO Ltd.}, ``{WRIO Internet OS},'' \url{https://webrunes.com/}, 2022.

\bibitem{lu2014mining}
E.~H.-C. Lu, Y.-W. Lin, and J.-B. Ciou, ``Mining mobile application sequential
  patterns for usage prediction,'' in \emph{Proc. IEEE Int. Conf. Granular
  Comput.}, Hokkaido, Japan, Oct. 2014, pp. 185--190.

\bibitem{steemit}
{Steemit, Inc.}, ``Steemit,'' \url{https://steemit.com/}, 2023.

\bibitem{nohuddin2012finding}
P.~N. Nohuddin, F.~Coenen, R.~Christley, C.~Setzkorn, Y.~Patel, and
  S.~Williams, ``Finding “interesting” trends in social networks using
  frequent pattern mining and self organizing maps,'' \emph{Knowl.-Based
  Syst.}, vol.~29, no.~1, pp. 104--113, May 2012.

\bibitem{openbazaar}
{OpenBazaar Team}, ``{OpenBazaar},''
  \url{https://github.com/OpenBazaar/openbazaar-desktop/}, 2021.

\bibitem{saputra2023market}
J.~P.~B. Saputra, S.~A. Rahayu, and T.~Hariguna, ``Market basket analysis using
  fp-growth algorithm to design marketing strategy by determining consumer
  purchasing patterns,'' \emph{J. Appl. Data Sci.}, vol.~4, no.~1, pp. 38--49,
  Jan. 2023.

\bibitem{web3antivirus}
{Web3 Antivirus}, ``{Web3 Antivirus: Protecting your digital assets},''
  \url{https://web3antivirus.io/}, 2022.

\bibitem{homayoun2017know}
S.~Homayoun, A.~Dehghantanha, M.~Ahmadzadeh, S.~Hashemi, and R.~Khayami, ``Know
  abnormal, find evil: frequent pattern mining for ransomware threat hunting
  and intelligence,'' \emph{IEEE Trans. Emerg. Topics Comput.}, vol.~8, no.~2,
  pp. 341--351, Sep. 2017.

\bibitem{mirror}
{Reflective Technologies, Inc.}, ``Mirror,'' \url{https://mirror.xyz/}, 2022.

\bibitem{skenduli2021mining}
M.~P. Skenduli, M.~Biba, C.~Loglisci, M.~Ceci, and D.~Malerba, ``Mining
  emotion-aware sequential rules at user-level from micro-blogs,'' \emph{J.
  Intell. Inf. Syst.}, vol.~57, no.~2, pp. 369--394, Jun. 2021.

\bibitem{opus}
{Opus Foundation}, ``{Web 3.0 designed for artists},''
  \url{https://opusfoundation.medium.com/web-3-0-designed-for-artists-f33e5d5036fb},
  2022.

\bibitem{hariri2012context}
N.~Hariri, B.~Mobasher, and R.~Burke, ``Context-aware music recommendation
  based on latenttopic sequential patterns,'' in \emph{Proc. ACM Conf.
  Recommender Syst.}, New York, USA, Sep. 2012, pp. 131--138.

\bibitem{han2022data}
J.~Han, J.~Pei, and H.~Tong, \emph{Data mining: concepts and techniques}.\hskip
  1em plus 0.5em minus 0.4em\relax Morgan kaufmann, 2022.

\bibitem{wang2021federated}
D.~Wang, S.~Shi, Y.~Zhu, and Z.~Han, ``Federated analytics: Opportunities and
  challenges,'' \emph{IEEE Network}, vol.~36, no.~1, pp. 151--158, 2021.

\bibitem{fa20}
{Google AI}, ``Federated analytics: Collaborative data science without data
  collection,''
  \url{https://ai.googleblog.com/2020/05/federated-analytics-collaborative-data.html},
  2020.

\bibitem{fltutorial}
P.~Kairouz, B.~McMahan, and V.~Smith, ``Federated learning and analytics:
  Industry meets academia,''
  \url{https://sites.google.com/view/fl-tutorial/home}, 2021.

\bibitem{zhu2020federated}
W.~Zhu, P.~Kairouz, B.~McMahan, H.~Sun, and W.~Li, ``Federated heavy hitters
  discovery with differential privacy,'' in \emph{Proc. Int. Conf. Artif.
  Intell. Statist.}, Palermo, Italy, Jun. 2020, pp. 3837--3847.

\bibitem{dennis2021heterogeneity}
D.~K. Dennis, T.~Li, and V.~Smith, ``Heterogeneity for the win: One-shot
  federated clustering,'' in \emph{Int. Conf. Mach. Learn.}, virtual event,
  Jul. 2021, pp. 2611--2620.

\bibitem{cormode2021bit}
G.~Cormode and I.~L. Markov, ``Bit-efficient numerical aggregation and stronger
  privacy for trust in federated analytics,'' \emph{arXiv}, 2021.

\bibitem{wang2022fedfpm}
Z.~Wang, Y.~Zhu, D.~Wang, and Z.~Han, ``Fedfpm: A unified federated analytics
  framework for collaborative frequent pattern mining,'' in \emph{Proc. IEEE
  Int. Conf. Comput. Commun.}, virtual event, May 2022.

\bibitem{erlingsson2014rappor}
{\'U}.~Erlingsson, V.~Pihur, and A.~Korolova, ``Rappor: Randomized aggregatable
  privacy-preserving ordinal response,'' in \emph{Proc. ACM Conf. Comput.
  Commun. Secur.}, Scottsdale, USA, Nov. 2014, pp. 1054--1067.

\bibitem{apple2017learning}
{Apple differential privacy team}, ``Learning with privacy at scale,''
  \url{https://machinelearning.apple.com/research/learning-with-privacy-at-scale},
  2017.

\bibitem{qin2016heavy}
Z.~Qin, Y.~Yang, T.~Yu, I.~Khalil, X.~Xiao, and K.~Ren, ``Heavy hitter
  estimation over set-valued data with local differential privacy,'' in
  \emph{Proc. ACM SIGSAC Conf. Comput. Commun. Secur.}, Vienna, Austria, Oct.
  2016, pp. 192--203.

\bibitem{wang2018locally}
T.~Wang, N.~Li, and S.~Jha, ``Locally differentially private frequent itemset
  mining,'' in \emph{Proc. IEEE Symp. Secur. Privacy}, San Francisco, CA, May
  2018, pp. 127--143.

\bibitem{li2021steemops}
C.~Li, B.~Palanisamy, R.~Xu, J.~Xu, and J.~Wang, ``Steemops: Extracting and
  analyzing key operations in steemit blockchain-based social media platform,''
  in \emph{Proc. ACM Conf. Data Appl. Secur. Privacy}, virtual event, 2021, pp.
  113--118.

\bibitem{dwork2006calibrating}
C.~Dwork, F.~McSherry, K.~Nissim, and A.~Smith, ``Calibrating noise to
  sensitivity in private data analysis,'' in \emph{Theory Cryptography Conf.},
  New York, USA, Mar. 2006, pp. 265--284.

\bibitem{bagdasaryan2022towards}
E.~Bagdasaryan, P.~Kairouz, S.~Mellem, A.~Gasc{\'o}n, K.~Bonawitz, D.~Estrin,
  and M.~Gruteser, ``Towards sparse federated analytics: Location heatmaps
  under distributed differential privacy with secure aggregation,'' \emph{Proc.
  Privacy Enhancing Technol.}, vol.~4, no.~1, pp. 162--182, 2022.

\bibitem{goryczka2015comprehensive}
S.~Goryczka and L.~Xiong, ``A comprehensive comparison of multiparty secure
  additions with differential privacy,'' \emph{IEEE Trans. Dependable Secure
  Comput.}, vol.~14, no.~5, pp. 463--477, Sep. 2015.

\bibitem{johnson2005univariate}
N.~L. Johnson, S.~Kotz, and A.~W. Kemp, \emph{Univariate discrete
  distributions}.\hskip 1em plus 0.5em minus 0.4em\relax John Wiley \& Sons,
  2005.

\bibitem{bell2020secure}
J.~H. Bell, K.~A. Bonawitz, A.~Gasc{\'o}n, T.~Lepoint, and M.~Raykova, ``Secure
  single-server aggregation with (poly) logarithmic overhead,'' in \emph{Proc.
  ACM SIGSAC Conf. Comput. Commun. Secur.}, virtual event, Nov. 2020, pp.
  1253--1269.

\bibitem{cadez2000visualization}
I.~Cadez, D.~Heckerman, C.~Meek, P.~Smyth, and S.~White, ``Visualization of
  navigation patterns on a web site using model-based clustering,'' in
  \emph{Proc. ACM Int. Conf. Knowl. Discovery Data Mining}, Boston, MA, Aug.
  2000, pp. 280--284.

\bibitem{harper2015movielens}
F.~M. Harper and J.~A. Konstan, ``The movielens datasets: History and
  context,'' \emph{ACM Trans. Interact. Intell. Syst.}, vol.~5, no.~4, pp.
  1--19, Jan. 2015.

\bibitem{DBLP:conf/osdi/LiZZ0XA021}
M.~Li, J.~Zhu, T.~Zhang, C.~Tan, Y.~Xia, S.~Angel, and H.~Chen, ``Bringing
  decentralized search to decentralized services,'' in \emph{Proc. {USENIX}
  Symp. Oper. Syst. Des. Implementation}, virtual event, Jul. 2021, pp.
  331--347.

\bibitem{meiklejohn2013fistful}
S.~Meiklejohn, M.~Pomarole, G.~Jordan, K.~Levchenko, D.~McCoy, G.~M. Voelker,
  and S.~Savage, ``A fistful of bitcoins: characterizing payments among men
  with no names,'' in \emph{Proc. Conf. Internet Meas. Conf.}, Barcelona,
  Spain, Oct. 2013, pp. 127--140.

\bibitem{spagnuolo2014bitiodine}
M.~Spagnuolo, F.~Maggi, and S.~Zanero, ``Bitiodine: Extracting intelligence
  from the bitcoin network,'' in \emph{Proc. Financial Cryptography Data
  Secur.}, Christ Church, Barbados, Mar. 2014, pp. 457--468.

\bibitem{moser2014towards}
M.~M{\"o}ser, R.~B{\"o}hme, and D.~Breuker, ``Towards risk scoring of bitcoin
  transactions,'' in \emph{Proc. Financial Cryptography Data Secur.}, Christ
  Church, Barbados, Mar. 2014, pp. 16--32.

\bibitem{acharya2019communication}
J.~Acharya and Z.~Sun, ``Communication complexity in locally private
  distribution estimation and heavy hitters,'' in \emph{Proc. Int. Conf. Mach.
  Learn.}, Long Beach, USA, Jun. 2019, pp. 51--60.

\bibitem{pinkas2019spot}
B.~Pinkas, M.~Rosulek, N.~Trieu, and A.~Yanai, ``Spot-light: lightweight
  private set intersection from sparse ot extension,'' in \emph{Proc. Annu.
  Int. Cryptology Conf.}, Barbara, USA, Aug. 2019, pp. 401--431.

\bibitem{bohler2020secure}
J.~B{\"o}hler and F.~Kerschbaum, ``Secure multi-party computation of
  differentially private median,'' in \emph{Proc. USENIX Conf. Secur. Symp.},
  virtual event, Aug. 2020, pp. 2147--2164.

\bibitem{wang2015sublinear}
D.~Wang and Z.~Han, \emph{Sublinear algorithms for big data
  applications}.\hskip 1em plus 0.5em minus 0.4em\relax Springer, 2015.

\end{thebibliography}

\end{document}